\newtheorem{lemma}{Lemma}
\newtheorem{theorem}{Theorem}
\newtheorem{definition}{Definition}
\newtheorem{proposition}{Proposition}
\long\def\comment#1{}
\newcommand{\etal}{\textit{et al}. }
\def\eg{$e.g.$}
\newcommand{\subsec}{\noindent \textbf}
\title{Untargeted Backdoor Watermark: Towards Harmless and Stealthy Dataset Copyright Protection}
\author{Yiming Li$^{1,}$\thanks{The first two authors contributed equally to this work. Correspondence to: Yang Bai and Shu-Tao Xia.} , Yang Bai$^{2, \ast}$, Yong Jiang$^{1}$, Yong Yang$^{3}$, Shu-Tao Xia$^{1}$, Bo Li$^{4}$\\
$^{1}$Tsinghua Shenzhen International Graduate School, Tsinghua University, China\\
$^{2}$Tencent Security Zhuque Lab, China\\
$^{3}$Tencent Security Platform Department, China\\
$^{4}$The Department of Computer Science, University of Illinois at Urbana-Champaign, USA\\
\texttt{li-ym18@mails.tinghua.edu.cn}; \texttt{\{mavisbai,coolcyang\}@tencent.com};\\
\texttt{\{jiangy,xiast\}@sz.tsinghua.edu.cn}; \texttt{lbo@illinois.edu}
}
\begin{document}

\maketitle

\begin{abstract}
Deep neural networks (DNNs) have demonstrated their superiority in practice. Arguably, the rapid development of DNNs is largely benefited from high-quality (open-sourced) datasets, based on which researchers and developers can easily evaluate and improve their learning methods. Since the data collection is usually time-consuming or even expensive, how to protect their copyrights is of great significance and worth further exploration. In this paper, we revisit dataset ownership verification. We find that existing verification methods introduced new security risks in DNNs trained on the protected dataset, due to the targeted nature of poison-only backdoor watermarks. To alleviate this problem, in this work, we explore the untargeted backdoor watermarking scheme, where the abnormal model behaviors are not deterministic. Specifically, we introduce two dispersibilities and prove their correlation, based on which we design the untargeted backdoor watermark under both poisoned-label and clean-label settings. We also discuss how to use the proposed untargeted backdoor watermark for dataset ownership verification. Experiments on benchmark datasets verify the effectiveness of our methods and their resistance to existing backdoor defenses. Our codes are available at \url{https://github.com/THUYimingLi/Untargeted_Backdoor_Watermark}.
\end{abstract}

\section{Introduction}
Deep neural networks (DNNs) have been widely and successfully deployed in many applications
, for their effectiveness and efficiency. Arguably, the existence of high-quality open-sourced datasets ($e.g.$, CIFAR-10 \cite{krizhevsky2009learning} and ImageNet \cite{deng2009imagenet}) is one of the key factors for the prosperity of DNNs. Researchers and developers can easily evaluate and improve their methods based on them. However, these datasets may probably be used for commercial purposes without authorization rather than only the educational or academic goals, due to their high accessibility. 

Currently, there were some classical methods for data protection, including encryption, data watermarking, and defenses against data leakage. However, these methods cannot be used to protect the copyrights of open-sourced datasets, since they either hinder the dataset accessibility or functionality ($e.g.$, encryption), require manipulating the training process ($e.g.$, differential privacy), or even have no effect in this case. To the best of our knowledge, there is only one method \cite{li2020open,li2022black} designed for protecting open-sourced datasets. Specifically, it first adopted poison-only backdoor attacks \cite{li2022backdoor} to watermark the unprotected dataset and then conducted ownership verification by verifying whether the suspicious model has specific targeted backdoor behaviors (as shown in Figure \ref{fig:intro_a}).

\begin{figure}[ht]
\centering
\vspace{-2em}
\begin{minipage}[t]{0.45\linewidth}
\centering
\includegraphics[width=\textwidth]{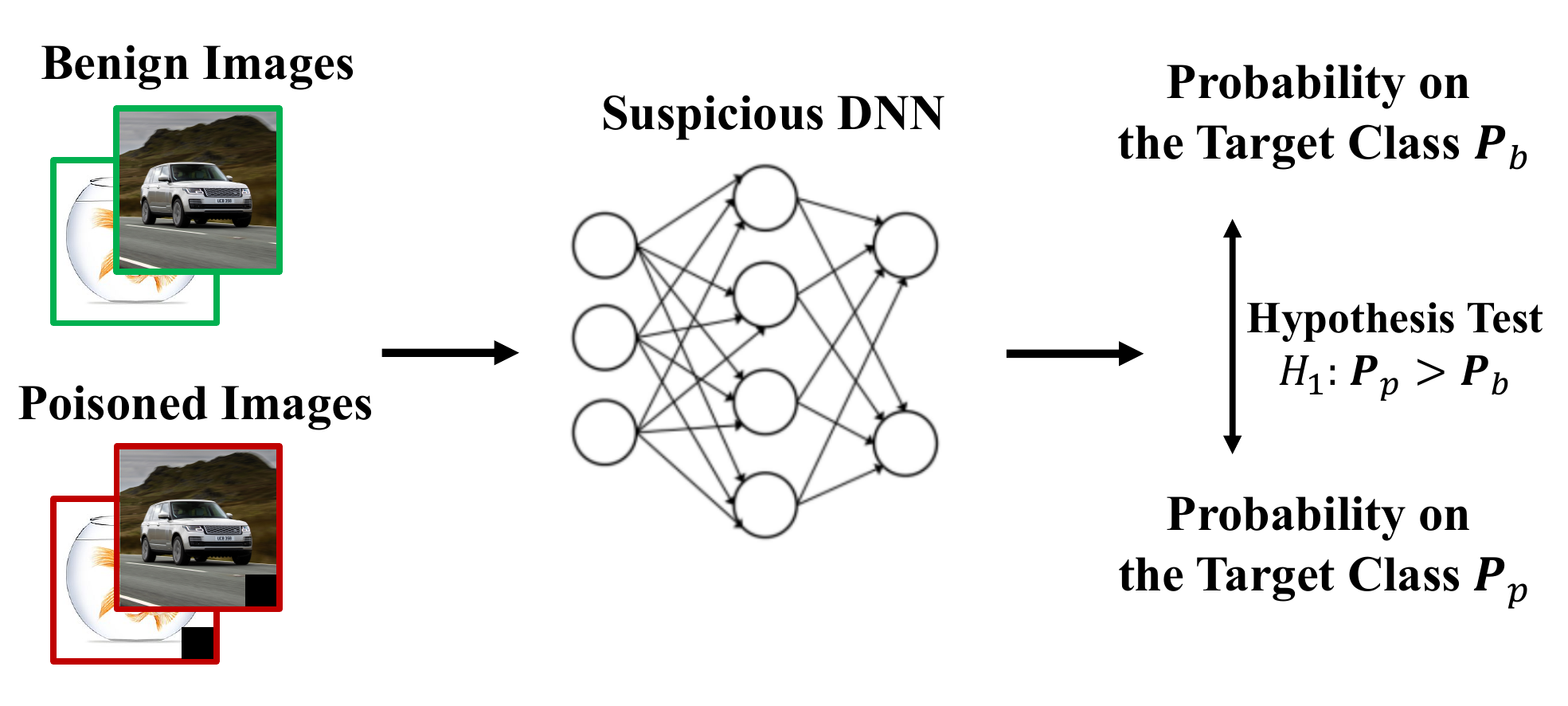}
\vspace{-1.3em}
\caption{The verification process of BEDW.}
\label{fig:intro_a}
\end{minipage}\hspace{1.5em}
\begin{minipage}[t]{0.45\linewidth}
\centering
\includegraphics[width=\textwidth]{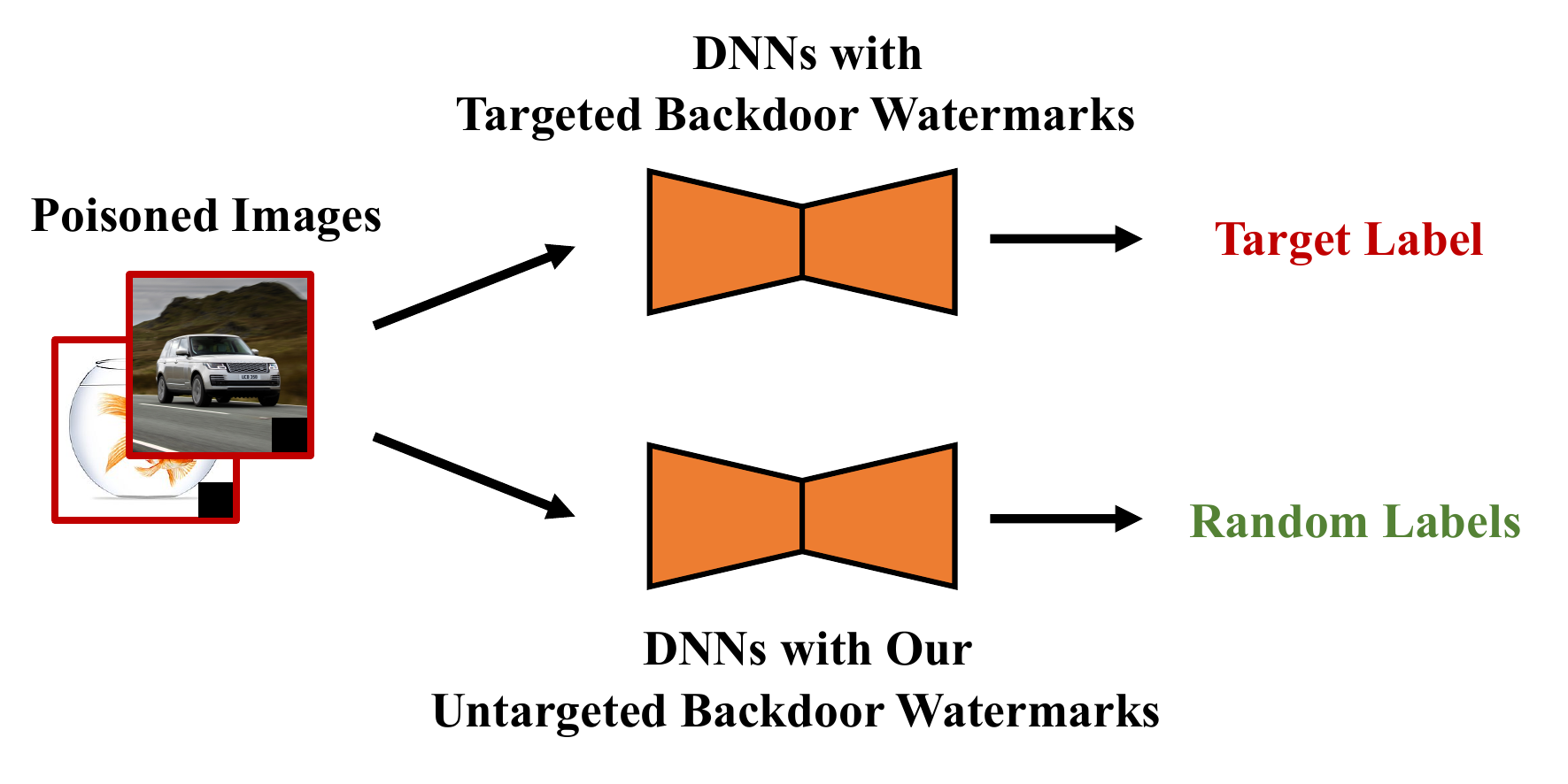}
\vspace{-2em}
\caption{The inference process of DNNs with different types of backdoor watermarks.}
\label{fig:intro_b}
\end{minipage}
\vspace{-1em}
\end{figure}

In this paper, we revisit dataset ownership verification. We argue that BEDW introduced new threatening security risks in DNNs trained on the protected datasets, due to the targeted manner of existing backdoor watermarks. Specifically, the adversaries can exploit the embedded hidden backdoors to maliciously and deterministically manipulate model predictions (as shown in Figure \ref{fig:intro_b}). Based on this understanding, we explore how to design the untargeted backdoor watermark (UBW) and how to use it for harmless and stealthy dataset ownership verification. Specifically, we first introduce two dispersibilities, including averaged sample-wise and averaged class-wise dispersibility, and prove their correlation. Based on them, we propose a simple yet effective heuristic method for UBW with poisoned labels ($i.e.$, UBW-P) and the UBW with clean labels ($i.e.$, UBW-C) based on bi-level optimization. The UBW-P is more effective while the UBW-C is more stealthy. We also design a UBW-based dataset ownership verification, based on the pairwise T-test \cite{hogg2005introduction} at the end.

The main contributions of this paper are four-fold:
\textbf{1)} We reveal the limitations of existing methods in protecting the copyrights of open-sourced datasets; \textbf{2)} We explore the untargeted backdoor watermark (UBW) paradigm under both poisoned-label and clean-label settings; \textbf{3)} We further discuss how to use our UBW for harmless and stealthy dataset ownership verification; \textbf{4)} Extensive experiments on benchmark datasets verify the effectiveness of our method. 

\section{Related Work}
In this paper, we focus on the backdoor watermarks in image classification. The watermarks in other tasks ($e.g.$, \cite{xiang2021backdoor,carlini2022poisoning,li2022few}) and their dataset protection are out of the scope of this paper.

\subsection{Data Protection}
\label{sec:ref_protection}
Data protection aims to prevent unauthorized data usage or protect data privacy, which has always been an important research direction. Currently, encryption, data watermarking, and the defenses against data leakage are the most widespread methods discussed in data protection, as follows:

\subsec{Encryption. }
Currently, encryption is the most widely used data protection method, which intends to encrypt the whole or parts of the protected data \cite{rivest1992md5,boneh2001identity,martins2017survey}. Only authorized users have the secret key to decrypt the encrypted data for further usage. Except for directly preventing unauthorized data usage, there were also some empirical methods focused on encrypting only the sensitive information ($e.g.$, backgrounds or image-label mappings) \cite{xiong2020adgan,li2021visual,cai2021generative}.

\subsec{Data Watermarking. }
This approach was initially used to embed a distinctive watermark into the data to protect its copyright based on ownership verification \cite{swanson1998multimedia,guo2018halftone,abdelnabi2021adversarial}. Recently, data watermarking was also adopted for other applications, such as DeepFake detection \cite{wang2021faketagger} and image steganography \cite{guan2022deepmih}, inspired by its unique properties.

\subsec{Defenses against Data Leakage. }
These methods mainly focus on preventing the leakage of sensitive information ($e.g.$, membership inference \cite{shokri2017membership}, attribute inference \cite{gong2016you}, and deep gradient leakage \cite{zhu2019deep}) during the training process. Among all these methods, differential privacy \cite{dwork2014algorithmic,zhu2021fine,bai2022multinomial} is the most representative one for its good theoretical properties and effectiveness. In general, differential privacy requires to introduce certain randomness via adding noises when training the model.

However, the aforementioned existing methods can not be adopted to prevent open-soured datasets from being unauthorizedly used, since they either hinder dataset functionalities or are not capable in this scenario. To the best of our knowledge, there was only one method \cite{li2020open,li2022black} designed for protecting open-sourced datasets, based on the poison-only targeted backdoor attacks \cite{li2022backdoor}. However, this method will introduce new security threats in the models trained on the protected dataset, which hinders its usage. How to better protect dataset copyrights is still an important open question.

\subsection{Backdoor Attacks}

Backdoor attacks are emerging yet critical threats in the training process of deep neural networks (DNNs), where the adversary intends to embed hidden backdoors into DNNs. The attacked models behave normally in predicting benign samples, whereas the predictions are maliciously changed whenever the adversary-specified trigger patterns appear. Due to this property, they were also used as the watermark techniques for model \cite{adi2018turning,jia2021entangled,li2022defending} and dataset \cite{li2020open,li2022black} ownership verification.

In general, existing backdoor attacks can be divided into three main categories, including \textbf{1)} poison-only attacks \cite{gu2019badnets,turner2019label,nguyen2021wanet}, \textbf{2)} training-controlled attacks \cite{saha2020hidden,zeng2021rethinking,shumailov2021manipulating}, and \textbf{3)} model-modified attacks \cite{rakin2020tbt,wang2022stealthy,qi2022towards}, based on the adversary's capacity levels. In this paper, we only focus on poison-only backdoor attacks, since they are the hardest attack having widespread threat scenarios. Only these attacks can be used to protect open-sourced datasets \cite{li2020open,li2022black}. In particular, based on the label type, existing poison-only attacks can also be separated into two main sub-types, as follows:

\subsec{Poison-only Backdoor Attacks with Poisoned Labels. }
In these attacks, the re-assigned labels of poisoned samples are different from their ground-truth labels. For example, a cat-like poisoned image may be labeled as the dog in the poisoned dataset released by backdoor adversaries. It is currently the most widespread attack paradigm. To the best of our knowledge, BadNets \cite{gu2019badnets} is the first and most representative attack with poisoned labels. Specifically, the BadNets adversary randomly selects certain benign samples from the original benign dataset to generate poisoned samples, based on adding a specific trigger pattern to the images and changing their labels to the pre-defined target label. The adversary will then combine the generated poisoned samples with the remaining benign ones to make the poisoned dataset, which is released to train the attacked models. After that, Chen \emph{et al.} \cite{chen2017targeted} proposed the blended attack, which suggested that the poisoned image should be similar to its benign version to ensure stealthiness. Most recently, a more stealthy and effective attack ($i.e.$, WaNet \cite{nguyen2021wanet}) was proposed, which exploited image warping to design trigger patterns.

\subsec{Poison-only Backdoor Attacks with Clean Labels. } 
Turner \emph{et al.} \cite{turner2019label} proposed the first poison-only backdoor attack with clean labels (\emph{i.e.}, label-consistent attack), where the target label is the same as the ground-truth label of all poisoned samples. They argued that attacks with poisoned labels were not stealthy enough even when the trigger pattern was invisible, since users could still identify the attacks by examining the image-label relation when they caught the poisoned samples. However, this attack is far less effective when the dataset has many classes or high image-resolution ($e.g.$, GTSRB and ImageNet) \cite{souri2022sleeper,huang2022backdoor,li2022backdoor}. Most recently, a more effective attack (\emph{i.e.}, Sleeper Agent) was proposed, which generated trigger patterns by optimization \cite{souri2022sleeper}. Nevertheless, these attacks are still difficult since the `robust features' contained in the poisoned images will hinder the learning of trigger patterns \cite{li2022backdoor}. How to design attacks with clean labels is still left far behind and worth further exploration.

Besides, to the best of our knowledge, all existing backdoor attacks are targeted, $i.e.$, the predictions of poisoned samples are deterministic and known by the adversaries. How to design backdoor attacks in an untargeted manner and its positive applications remain blank and worth further explorations.

\section{Untargeted Backdoor Watermark (UBW)}
\label{sec:UBW}

\subsection{Preliminaries}
\label{sec:prelim}

\subsec{Threat Model. }
In this paper, we focus on poison-only backdoor attacks as the backdoor watermarks in image classification. Specifically, the backdoor adversaries are only allowed to modify some benign samples while having neither the information nor the ability to modify other training components ($e.g.$, training loss, training schedule, and model structure). The generated poisoned samples with remaining unmodified benign ones will be released to victims, who will train their DNNs based on them. In particular, we only consider poison-only backdoor attacks instead of other types of methods (\eg, training-controlled attacks or model-modified attacks) because they require additional adversary capacities and therefore can not be used to protect open-sourced datasets \cite{li2020open,li2022black}.

\subsec{The Main Pipeline of Existing Targeted Backdoor Attacks. }
Let $\mathcal{D} = \{ (\bm{x}_i, y_i) \}_{i=1}^{N}$ denotes the benign training set, where $\bm{x}_i \in \mathcal{X}= \{0,1,\ldots, 255\}^{C\times W \times H}$ is the image, $y_i \in \mathcal{Y} = \{1,\ldots, K\}$ is its label, and $K$ is the number of classes. How to generate the poisoned dataset $\mathcal{D}_{p}$ is the cornerstone of poison-only backdoor attacks. To the best of our knowledge, almost all existing backdoor attacks are \emph{targeted}, where all poisoned samples share the same target label. Specifically, $\mathcal{D}_{p}$ consists of two disjoint parts, including the modified version of a selected subset ($i.e.$, $\mathcal{D}_s$) of $\mathcal{D}$ and remaining benign samples, $i.e.$, 
$\mathcal{D}_{p} =  \mathcal{D}_{m} \cup \mathcal{D}_{b}$, where $y_t$ is an adversary-specified target label, $\mathcal{D}_{b} = \mathcal{D} \backslash \mathcal{D}_{s}$, $\mathcal{D}_{m} = \left\{(\bm{x}', y_t)| \bm{x}' = G(\bm{x};\bm{\theta}), (\bm{x},y) \in \mathcal{D}_s \right\}$, $\gamma \triangleq \frac{|\mathcal{D}_{s}|}{|\mathcal{D}|}$ is the \emph{poisoning rate}, and $G: \mathcal{X} \rightarrow \mathcal{X}$ is an adversary-specified poisoned image generator with parameter $\bm{\theta}$. In particular, poison-only backdoor attacks are mainly characterized by their poison generator $G$. For example, $G(\bm{x}) = (\bm{1}-\bm{\alpha}) \otimes \bm{x} + \bm{\alpha} \otimes \bm{t}$, where $\bm{\alpha} \in [0,1]^{C \times W \times H}$, $\bm{t} \in \mathcal{X}$ is the trigger pattern, and $\otimes$ is the element-wise product in the blended attack \cite{chen2017targeted}; $G(\bm{x}) = \bm{x} + \bm{t}$ in the ISSBA \cite{li2021invisible}. Once the poisoned dataset $\mathcal{D}_{p}$ is generated, it will be released to train DNNs. Accordingly, in the inference process, the attacked model behaves normally on predicting benign samples while its predictions will be maliciously and constantly changed to the target label whenever poisoned images appear.

\subsection{Problem Formulation}
As described in previous sections, DNNs trained on the poisoned dataset will have distinctive behaviors while behaving normally in predicting benign images. As such, the poison-only backdoor attacks can be used to watermark (open-sourced) datasets for their copyright protection. However, this method introduces new security threats in the model since the backdoor adversaries can determine model predictions of malicious samples, due to the targeted nature of existing backdoor watermarks. Motivated by this understanding, we explore untargeted backdoor watermark (UBW) in this paper.

\subsec{Our Watermark's Goals. }
The UBW has three main goals, including \textbf{1)} \emph{effectiveness}, \textbf{2)} \emph{stealthiness}, and \textbf{3)} \emph{dispersibility}. Specifically, the effectiveness requires that the watermarked DNNs will misclassify poisoned images; The stealthiness needs that dataset users can not identify the watermark; The dispersibility (denoted in Definition \ref{def:D_p}) ensures dispersible predictions of poisoned images.

\vspace{0.2em}
\begin{definition}[Averaged Prediction Dispersibility]\label{def:D_p}
Let $\mathcal{D} = \{ (\bm{x}_i, y_i) \}_{i=1}^{N}$ indicates the dataset where $y_i \in \mathcal{Y} = \{1,\ldots, K\}$ and $C: \mathcal{X} \rightarrow \mathcal{Y}$ is a classifier. Let $\bm{P}^{(j)}$ is the probability vector of model predictions on samples having the ground-truth label $j$, where the $i$-th element of $\bm{P}^{(j)}$ is 
\begin{equation}
    P^{(j)}_i \triangleq \frac{\sum_{k=1}^N \mathbb{I}\{C(\bm{x}_k)=i\} \cdot \mathbb{I}\{y_k=j\}}{\sum_{k=1}^N  \mathbb{I}\{y_k=j\}}.
\end{equation}
The averaged prediction dispersibility $D_p$ is defined as

\begin{equation}
    D_p \triangleq \frac{1}{N} \sum_{j=1}^K \sum_{i=1}^N \mathbb{I}\{y_i=j\} \cdot H\left(\bm{P}^{(j)}\right), 
\end{equation}
where $H(\cdot)$ denotes the entropy \cite{kullback1997information}.
\end{definition}

In general, $D_p$ measures how dispersible the predictions of different images having the same label. The larger the $D_p$, the harder that the adversaries can deterministically manipulate the predictions.

\subsection{Untargeted Backdoor Watermark with Poisoned Labels (UBW-P)}

Arguably, the most straightforward strategy to fulfill prediction dispersibility is to make the predictions of poisoned images as the uniform probability vector. Specifically, we propose to randomly `shuffle' the label of poisoned training samples when making the poisoned dataset. This attack is dubbed untargeted backdoor watermark with poisoned labels (UBW-P) in this paper.

Specifically, similar to the existing targeted backdoor watermarks, our UBW-P first randomly select a subset $\mathcal{D}_s$ from the benign dataset $\mathcal{D}$ to make its modified version $\mathcal{D}_m$ by $\mathcal{D}_{m} = \left\{(\bm{x}', y')| \bm{x}' = G(\bm{x};\bm{\theta}), y' \sim [1, \cdots, K], (\bm{x},y) \in \mathcal{D}_s \right\}$, where `$y' \sim [1, \cdots, K]$' denotes sampling $y'$ from the list $[1, \cdots, K]$ with equal probability and $G$ is an adversary-specified poisoned image generator. The modified subset $\mathcal{D}_m$ associated with the remaining benign samples $\mathcal{D} \backslash \mathcal{D}_{s}$ will then be released to train the model $f(\cdot; \bm{w})$ by 
\begin{equation}
    \min_{\bm{w}} \sum_{(\bm{x},y)\in \mathcal{D}_m \cup (\mathcal{D} \backslash \mathcal{D}_{s})} \mathcal{L}(f(\bm{x};\bm{w}),y),
\end{equation}
where $\mathcal{L}$ is the loss function ($e.g.$, cross-entropy \cite{kullback1997information}). 

In the inference process, for any testing sample $(\hat{\bm{x}}, \hat{y}) \notin \mathcal{D}$, the adversary can activate the hidden backdoor contained in attacked DNNs with poisoned image $G(\hat{\bm{x}})$, based on the generator $G$.

\subsection{Untargeted Backdoor Watermark with Clean Labels (UBW-C)}
\label{sec:UBW-C}

As we will demonstrate in Section \ref{sec:exps}, the aforementioned heuristic UBW-P can reach promising results. However, it is not stealthy enough even though the poisoning rate can be small, since UBW-P is still with poisoned labels. Dataset users may identify the watermark by examining the image-label relation when they catch the poisoned samples. In this section, we discuss how to design the untargeted backdoor watermark with clean labels (UBW-C), based on the bi-level optimization \cite{liu2021investigating}.

To formulate UBW-C as a bi-level optimization, we need to optimize the prediction dispersibility. However, it is non-differentiable and therefore cannot be optimized directly. In this paper, we introduce two differentiable surrogate dispersibilities to alleviate this problem, as follows:

\vspace{0.15em}
\begin{definition}[Averaged Sample-wise and Class-wise Dispersibility]
Let $\mathcal{D} = \{ (\bm{x}_i, y_i) \}_{i=1}^{N}$ indicates the dataset where $y_i \in \mathcal{Y} = \{1,\ldots, K\}$, the averaged sample-wise dispersibility of predictions given by the DNN $f(\cdot)$ (over dataset $\mathcal{D}$) is defined as 
\begin{equation}
    D_s \triangleq \frac{1}{N}\sum_{i=1}^{N}  H\left(f(\bm{x}_i)\right), 
\end{equation}
while the class-wise dispersibility is defined as

\begin{equation}
   D_c \triangleq \frac{1}{N} \sum_{j=1}^K   \sum_{i=1}^N \mathbb{I}\{y_i=j\} \cdot H\left(\frac{\sum_{k=1}^N f(\bm{x}_k) \cdot \mathbb{I}\{y_k=j\}}{\sum_{k=1}^N \mathbb{I}\{y_k=j\}}\right).
\end{equation}

\end{definition}

In general, the averaged sample-wise dispersibility describes the average dispersion of predicted probability vectors for all samples, while the averaged class-wise dispersibility depicts the average degree of the dispersion of the average prediction of samples in each class. Maximizing them will have similar effects in optimizing the prediction dispersibility $D_p$.

In particular, the main difference of UBW-C compared with UBW-P and existing targeted backdoor watermarks lies in the generation of the modified subset $\mathcal{D}_m$. Specifically, in UBW-C, \emph{we do not modify the labels of all poisoned samples}, $i.e.$, $\mathcal{D}_{m} = \left\{(\bm{x}', y)| \bm{x}' = G(\bm{x};\bm{\theta}), (\bm{x},y) \in \mathcal{D}_s \right\}$. Before we reach the technical details of our UBW-C, we first present the necessary lemma and theorem.

\begin{lemma}
\label{lemma1}
The averaged class-wise dispersibility is always greater than the averaged sample-wise dispersibility divided by $N$, $i.e.$, 
$D_c > \frac{1}{N} \cdot D_s$.
\end{lemma}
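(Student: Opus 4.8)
The plan is to reduce both quantities to class-level sums and then exploit the concavity of the Shannon entropy. First I would introduce the shorthand $N_j \triangleq \sum_{k=1}^N \mathbb{I}\{y_k=j\}$ for the number of samples carrying label $j$, and $\bar{\bm{f}}^{(j)} \triangleq \frac{1}{N_j}\sum_{k:\, y_k=j} f(\bm{x}_k)$ for the mean prediction over class $j$. With this notation the class-wise dispersibility collapses to $D_c = \frac{1}{N}\sum_{j=1}^K N_j\, H(\bar{\bm{f}}^{(j)})$, since $\sum_{i=1}^N \mathbb{I}\{y_i=j\}=N_j$ and the argument of $H$ is exactly $\bar{\bm{f}}^{(j)}$; likewise $D_s=\frac{1}{N}\sum_{i=1}^N H(f(\bm{x}_i))$.

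The key step is to observe that $\bar{\bm{f}}^{(j)}$ is a uniform average, hence a convex combination, of the probability vectors $\{f(\bm{x}_k): y_k=j\}$, and that $H(\cdot)$ is concave on the probability simplex. Applying Jensen's inequality per class therefore yields $H(\bar{\bm{f}}^{(j)}) \ge \frac{1}{N_j}\sum_{k:\, y_k=j} H(f(\bm{x}_k))$. Multiplying by $N_j$ and summing over all classes $j=1,\dots,K$ recombines the right-hand side into the full sample sum, because the classes partition the dataset, giving $\sum_{j=1}^K N_j\, H(\bar{\bm{f}}^{(j)}) \ge \sum_{k=1}^N H(f(\bm{x}_k))$, i.e.\ $D_c \ge D_s$.

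Finally I would close the gap to the stated strict bound. Since the network outputs are softmax probability vectors, each $H(f(\bm{x}_i))>0$, so $D_s>0$; and because a dataset has $N>1$ samples, $\frac{1}{N}D_s < D_s$. Chaining this with $D_c \ge D_s$ gives $D_c \ge D_s > \frac{1}{N}D_s$, which is the claim.

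I anticipate the only delicate point is the strictness, not the concavity estimate itself. Concavity of entropy and the resulting Jensen step are standard, but they only produce the non-strict inequality $D_c \ge D_s$; the strict ``$>$'' in $D_c > \frac{1}{N}D_s$ must come entirely from the slack $D_s > \frac{1}{N}D_s$, which in turn relies on the mild, always-satisfied assumptions $N>1$ and $D_s>0$. I would therefore be explicit that the looseness of the factor $\frac{1}{N}$ is exactly what makes the margin available, so that no appeal to strict concavity (and hence no distinctness assumption on the per-sample predictions) is required.
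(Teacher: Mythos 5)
Your proof is correct and follows essentially the same route as the paper's: both apply Jensen's inequality to the concave entropy within each class and then harvest the slack from replacing $\frac{1}{N_j}$ by $\frac{1}{N}$. If anything, your organization is slightly cleaner — you first establish the stronger intermediate fact $D_c \ge D_s$ and isolate the strictness in the elementary step $D_s > \frac{1}{N}D_s$ (requiring only $N>1$ and $D_s>0$), whereas the paper folds the strict loosening of $\frac{1}{N_j}$ to $\frac{1}{N}$ into the Jensen chain, where it implicitly needs $N_j < N$ and positive entropy in each class.
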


\vspace{0.15em}
\begin{theorem}
\label{thm1}
Let $f(\cdot;\bm{w})$ denotes the DNN with parameter $\bm{w}$, $G(\cdot;\bm{\theta})$ is the poisoned image generator with parameter $\bm{\theta}$, and $\mathcal{D}=\{(\bm{x}_i, y_i)\}_{i=1}^N$ is a given dataset with $K$ different classes, we have
\begin{equation}\nonumber
\max_{\bm{\theta}} \sum_{i=1}^{N} H\left(f(G(\bm{x}_i;\bm{\theta});\bm{w})\right) < N \cdot \max_{\bm{\theta}} \sum_{j=1}^K \sum_{i=1}^N \mathbb{I}\{y_i=j\} \cdot H\left(\frac{\sum_{i=1}^N f(G(\bm{x}_i;\bm{\theta});\bm{w}) \cdot \mathbb{I}\{y_i=j\}}{\sum_{i=1}^N \mathbb{I}\{y_i=j\}}\right).   
\end{equation}
\end{theorem}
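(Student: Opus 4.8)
The plan is to recognize that the two sides of Theorem~\ref{thm1} are, up to the factor $N$, nothing but the averaged sample-wise and class-wise dispersibilities of the \emph{poisoned} classifier, and then to apply Lemma~\ref{lemma1} directly. Concretely, for each generator parameter $\bm{\theta}$ I would write $D_s(\bm{\theta}) \triangleq \frac{1}{N}\sum_{i=1}^N H\!\left(f(G(\bm{x}_i;\bm{\theta});\bm{w})\right)$ and $D_c(\bm{\theta}) \triangleq \frac{1}{N}\sum_{j=1}^K \sum_{i=1}^N \mathbb{I}\{y_i=j\}\cdot H\!\left(\frac{\sum_{k=1}^N f(G(\bm{x}_k;\bm{\theta});\bm{w})\cdot \mathbb{I}\{y_k=j\}}{\sum_{k=1}^N \mathbb{I}\{y_k=j\}}\right)$, i.e.\ the sample-wise and class-wise dispersibilities evaluated on the classifier $f(G(\cdot;\bm{\theta});\bm{w})$ over $\mathcal{D}$. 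With this notation the left-hand side of the claim equals $N\cdot\max_{\bm{\theta}}D_s(\bm{\theta})$ and the right-hand side equals $N^2\cdot\max_{\bm{\theta}}D_c(\bm{\theta})$, so after dividing by $N$ the theorem is equivalent to $\max_{\bm{\theta}}D_s(\bm{\theta}) < N\cdot\max_{\bm{\theta}}D_c(\bm{\theta})$.

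Next I would apply Lemma~\ref{lemma1}. Since the lemma holds for an arbitrary fixed classifier, applying it to $f(G(\cdot;\bm{\theta});\bm{w})$ for each fixed $\bm{\theta}$ yields the pointwise strict inequality $D_c(\bm{\theta}) > \frac{1}{N}D_s(\bm{\theta})$, equivalently $N\cdot D_c(\bm{\theta}) > D_s(\bm{\theta})$ for every $\bm{\theta}$. Letting $\bm{\theta}^\star \in \arg\max_{\bm{\theta}}D_s(\bm{\theta})$, I would then chain
\begin{equation}\nonumber
\max_{\bm{\theta}}D_s(\bm{\theta}) = D_s(\bm{\theta}^\star) < N\cdot D_c(\bm{\theta}^\star) \le N\cdot\max_{\bm{\theta}}D_c(\bm{\theta}),
\end{equation}
where the strict middle inequality is Lemma~\ref{lemma1} evaluated at $\bm{\theta}^\star$ and the last step is monotonicity of the maximum. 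Multiplying through by $N$ recovers the stated bound.

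The step I expect to require the most care is the joint handling of the two independent maximizations: the maximizers of $D_s$ and $D_c$ need not coincide, so one cannot max both sides of the pointwise bound at once. The trick is to evaluate the pointwise lemma at the maximizer of the weaker (left-hand) quantity and only \emph{afterward} relax $D_c(\bm{\theta}^\star)$ up to its own maximum; the strictness of the final inequality is then inherited verbatim from the strict inequality in Lemma~\ref{lemma1}. I would also note that one should justify the existence of the maximizer $\bm{\theta}^\star$ (for instance via continuity of the entropy composition in $\bm{\theta}$ over a compact parameter domain, or by phrasing everything with suprema), but this is a routine technicality rather than the substance of the argument.
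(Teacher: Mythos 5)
Your proposal is correct and follows essentially the same route as the paper, which simply applies Lemma~\ref{lemma1} to the composed classifier $f(G(\cdot;\bm{\theta});\bm{w})$, multiplies by $N$, and maximizes both sides. Your explicit handling of the two independent maximizations (evaluating the pointwise inequality at the maximizer of the left-hand quantity before relaxing to the maximum of the right-hand one) is a welcome clarification of a step the paper leaves implicit.
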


Theorem \ref{thm1} implies that \emph{we can optimize the averaged sample-wise dispersibility $D_s$ and the class-wise dispersibility $D_c$ simultaneously by only maximizing $D_s$}. It motivates us to generate the modified subset $\mathcal{D}_m$ in our UBW-C (via optimizing generator $G$) as follows:

\begin{align}
    &\max_{\bm{\theta}} \sum_{(\bm{x},y) \in \mathcal{D}_s} \left[\mathcal{L}(f(G(\bm{x};\bm{\theta});\bm{w}^{*}), y) + \lambda \cdot H\left(f(G(\bm{x};\bm{\theta});\bm{w}^{*})\right)\right], \\
    & s.t. \ \bm{w}^{*} = \arg \min_{\bm{w}} \sum_{(\bm{x},y)\in \mathcal{D}_p} \mathcal{L}(f(\bm{x};\bm{w}),y), 
\end{align}
where $\lambda$ is a non-negative trade-off hyper-parameter.

In general, the aforementioned process is a standard bi-level optimization, which can be effectively and efficiently solved by alternatively optimizing the lower-level and upper-level sub-problems \cite{liu2021investigating}. In particular, the optimization is conducted via stochastic gradient descent (SGD) with mini-batches \cite{ruder2016overview}, where estimating the class-wise dispersibility is difficult (especially when there are many classes). In contrast, \emph{the estimation of sample-wise dispersibility $D_s$ is still simple and accurate even within a mini-batch}. It is another benefit of only using the averaged sample-wise dispersibility for optimization in our UBW-C. Please refer to the appendix for more our optimization details.

\section{Towards Harmless Dataset Ownership Verification via UBW}

\subsection{Problem Formulation}
Given a suspicious model, the defenders intend to verify whether it is trained on the (protected) dataset. Same as the previous work \cite{li2020open,li2022black}, we assume that the dataset defenders can only query the suspicious model to obtain predicted probability vectors of input samples, whereas having no information about the training process and model parameters.

\subsection{The Proposed Method}

Since defenders can only modify the released dataset and query the suspicious model, the only way to tackle the aforementioned problem is to watermark the (unprotected) benign dataset so that models trained on it will have specific distinctive prediction behaviors. The dataset owners can release the watermarked dataset instead of the original one for copyright protection.

As described in Section \ref{sec:UBW}, the DNNs watermarked by our UBW behave normally on benign samples while having dispersible predictions on poisoned samples. As such, it can be used to design harmless and stealthy dataset ownership verification. 
In general, given a suspicious model, the defenders can verify whether it was trained on the protected dataset by examining whether the model contains specific untargeted backdoor. \emph{The model is regarded as trained on the protected dataset if it contains that backdoor}. To verify it, we design a hypothesis-test-based method, as follows:

\begin{proposition}
Suppose $f(\bm{x})$ is the posterior probability of $\bm{x}$ predicted by the suspicious model. Let variable $\bm{X}$ denotes the benign sample and variable $\bm{X}'$ is its poisoned version ($i.e.$, $\bm{X}'=G(\bm{X})$), while variable $P_b=f(\bm{X})_{Y}$ and $P_p=f(\bm{X}')_{Y}$ indicate the predicted probability on the ground-truth label $Y$ of $\bm{X}$ and $\bm{X}'$, respectively. Given the null hypothesis $H_0: P_b = P_p + \tau$ ($H_1: P_b  > P_p + \tau$) where the hyper-parameter $\tau \in [0,1]$, we claim that the suspicious model is trained on the protected dataset (with $\tau$-certainty) if and only if $H_0$ is rejected. 
\end{proposition}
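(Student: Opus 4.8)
The plan is to recast the claim as a one-sample (pairwise) t-test on the paired differences $P_b-P_p$ and then argue both directions of the biconditional by invoking the dispersibility property established in Section~\ref{sec:UBW}. First I would draw $m$ benign samples $\{(\bm{x}_i,y_i)\}_{i=1}^m$ from the testing set (disjoint from training), form their poisoned counterparts $\bm{x}_i'=G(\bm{x}_i)$, and record the paired scores $P_b^{(i)}=f(\bm{x}_i)_{y_i}$ and $P_p^{(i)}=f(\bm{x}_i')_{y_i}$. Treating these as realizations of the random variables $P_b$ and $P_p$, I would define the per-sample gap $d_i \triangleq P_b^{(i)} - P_p^{(i)} - \tau$, so that the null hypothesis becomes $\mathbb{E}[d_i]=0$ against the one-sided alternative $\mathbb{E}[d_i]>0$. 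This is exactly the setting of the pairwise T-test, which removes the large sample-to-sample variance of the raw scores by differencing within each pair.

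Next I would form the test statistic $t = \frac{\sqrt{m}\,\bar{d}}{s_d}$, where $\bar{d}=\frac{1}{m}\sum_{i=1}^m d_i$ and $s_d^2=\frac{1}{m-1}\sum_{i=1}^m (d_i-\bar{d})^2$. Under $H_0$, and appealing to the central limit theorem so that $\bar{d}$ is approximately Gaussian, $t$ follows (approximately) a Student's $t$-distribution with $m-1$ degrees of freedom. Rejecting $H_0$ at significance level $\alpha$ is then equivalent to $t$ exceeding the one-sided critical value $t_{1-\alpha,\,m-1}$, which in turn is equivalent to the estimated mean gap being significantly positive, \emph{i.e.}, $\mathbb{E}[P_b-P_p]>\tau$. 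The margin $\tau$ enters as the certainty threshold: a larger $\tau$ demands a wider confidence gap before the ownership claim is asserted, which is the meaning of ``$\tau$-certainty''.

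Finally I would establish the biconditional. For the forward direction, if the suspicious model was trained on the protected dataset, the UBW guarantees dispersible (hence low-confidence on the true label) predictions on poisoned inputs, so $P_p$ concentrates well below $P_b$; consequently $\mathbb{E}[P_b-P_p]$ exceeds the margin $\tau$ and $H_0$ is rejected. For the converse, if the model was trained on a clean dataset it never learned the trigger, so benign and poisoned scores satisfy $P_b\approx P_p$, the gap stays below $\tau$, and $H_0$ cannot be rejected. The main obstacle is this second, \emph{behavioral} direction: a clean model may still perturb its output under the trigger $G$, so the implication is not purely algebraic and hinges on choosing $\tau$ large enough to absorb this nuisance perturbation while staying strictly below the gap induced by a genuine watermark. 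Making this separation clean is precisely the role of $\tau$, and it is the crux of why the test reliably distinguishes watermarked from clean models.
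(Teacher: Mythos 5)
The paper gives no proof of this Proposition at all: it is stated as a decision rule, and the only ``justification'' offered is the paragraph immediately following it, which operationalizes the rule as a pairwise T-test on $m$ sampled pairs with p-value compared against a significance level $\alpha$. Your first two paragraphs reconstruct exactly that operationalization (paired differences $d_i = P_b^{(i)} - P_p^{(i)} - \tau$, Student's $t$ statistic with $m-1$ degrees of freedom, one-sided rejection region), so on the statistical side you are doing the same thing the paper does, just more explicitly.

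Your third paragraph is where you go beyond the paper, and your instinct there is correct: the ``if and only if'' is not a mathematically provable biconditional but a behavioral claim that rests on two empirical properties of the UBW --- that watermarked models suppress $P_p$ well below $P_b$, and that independent models do not. The paper never argues these directions formally either; it validates them experimentally (Tables \ref{tab:verification_p}--\ref{tab:verification_c}) and manages the failure mode you identify through exactly the mechanisms you name plus one you omit: besides choosing $\tau$ large enough to absorb the nuisance perturbation of a clean model under the trigger (the appendix ablation on $\tau$ confirms that UBW-C misjudges the Independent-T case when $\tau < 0.15$), the paper also restricts the test to samples that the suspicious model classifies correctly, precisely because the untargeted nature of the watermark would otherwise let a low-accuracy independent model produce a spurious gap. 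So your proposal is consistent with the paper and candidly flags the one genuine logical gap --- just be aware that neither you nor the paper closes that gap deductively; it is closed empirically, and your write-up should present the converse direction as an assumption validated by experiment rather than as a step of a proof.
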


In practice, we randomly sample $m$ different benign samples to conduct the pairwise T-test \cite{hogg2005introduction} and calculate its p-value. The null hypothesis $H_0$ is rejected if the p-value is smaller than the significance level $\alpha$. In particular, \emph{we only select samples that can be correctly classified by the suspicious model} to reduce the side-effects of model accuracy. Otherwise, due to the untargeted nature of our UBW, our verification may misjudge when there is dataset stealing, if the benign accuracy of the suspicious model is relatively low. Besides, we also calculate the \emph{confidence score} $\Delta P = P_b - P_p$ to represent the verification confidence. \emph{The larger the $\Delta P$, the more confident the verification}.

\section{Experiments}
\label{sec:exps}

\subsection{Experimental Settings}

\subsec{Datasets and Models. }
In this paper, we conduct experiments on two classical benchmark datasets, including CIFAR-10 \cite{krizhevsky2009learning} and (a subset of) ImageNet \cite{deng2009imagenet}, with ResNet-18 \cite{he2016deep}. Specifically, we randomly select a subset containing $50$ classes with $25,000$ images from the original ImageNet for training (500 images per class) and $2,500$ images for testing (50 images per class). For simplicity, all images are resized to $3 \times 64 \times 64$, following the settings used in Tiny-ImageNet \cite{chrabaszcz2017downsampled}.

\subsec{Baseline Selection. }
We compare our UBW with representative existing poison-only backdoor attacks. Specifically, for attacks with poisoned labels, we adopt BadNets \cite{gu2019badnets}, blended attack (dubbed as `Blended') \cite{chen2017targeted}, and WaNet \cite{nguyen2021wanet} as the baseline methods. They are the representative of visible attacks, patch-based invisible attacks, and non-patch-based invisible attacks, respectively. We use the label-consistent attack (dubbed as `Label-Consistent') \cite{turner2019label} and Sleeper Agent \cite{souri2022sleeper} as the representative of attacks with clean labels. Besides, we also include the models trained on the benign dataset (dubbed as `No Attack') as another baseline for reference.

\subsection{The Performance of Dataset Watermarking}

\subsec{Settings. }
We set the poisoning rate $\gamma = 0.1$ for all watermarks on both datasets. In particular, since the label-consistent attack can only modify samples from the target class, its poisoning rate is set to its maximum ($i.e.$, 0.02) on the ImageNet dataset. The target label $y_t$ is set to 1 for all targeted watermarks. Besides, following the classical settings in existing papers, we adopt a white-black square as the trigger pattern for BadNets, blended attack, label-consistent attack, and UBW-P on both datasets. The trigger patterns adopted for Sleeper Agent and UBW-C are sample-specific. We set $\lambda=2$ for UBW-C on both datasets. The example of poisoned samples generated by different methods is shown in Figure \ref{fig:poisoned_samples}. More detailed settings are described in the appendix.

\subsec{Evaluation Metrics. }
We use the benign accuracy (BA), the attack success rate (ASR), and the averaged prediction dispersibility ($D_p$) to evaluate the watermark performance. In particular, we introduce two types of ASR, including the attack success rate on all testing samples (ASR-A) and the attack success rate on correctly classified testing samples (ASR-C). In general, \emph{the larger the BA, ASR, and $D_p$, the better the watermark}. Please refer to the appendix for more details.

\begin{figure}
\vspace{-2.5em}
\centering
\subfigure[CIFAR-10]{
\centering
\includegraphics[width=0.83\textwidth]{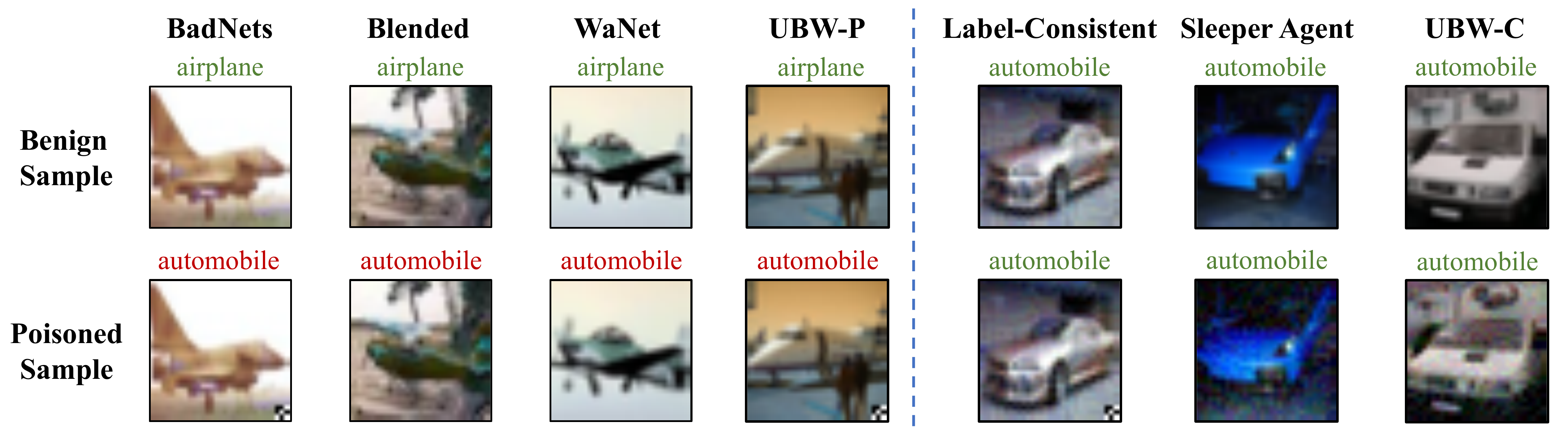}
}\vspace{-0.7em}
\subfigure[ImageNet]{
\centering
\includegraphics[width=0.83\textwidth]{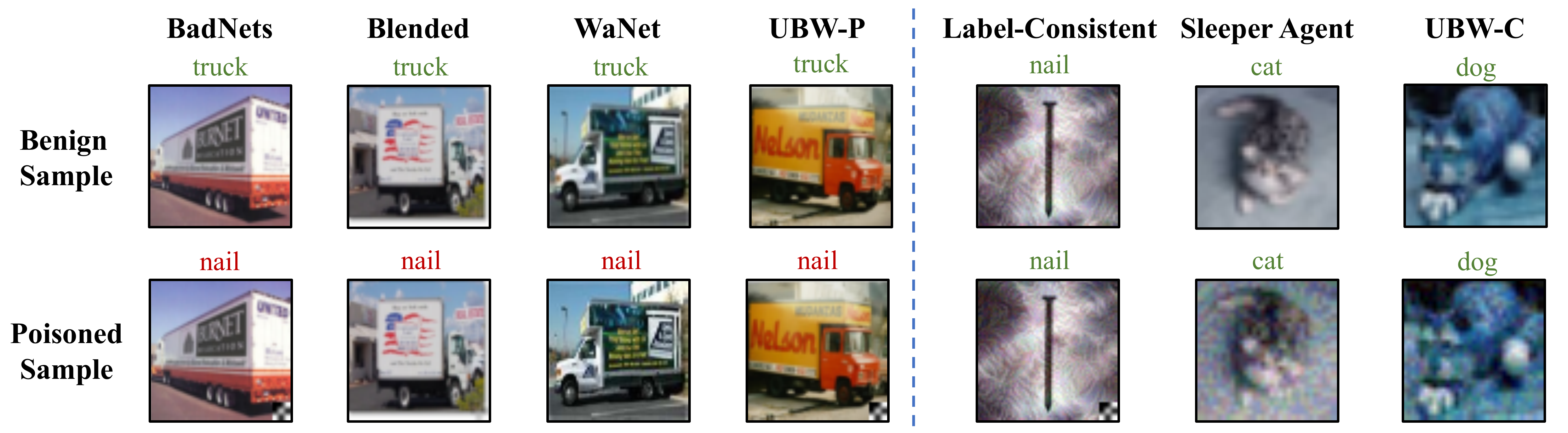}
}
\vspace{-0.7em}
\caption{The example of samples involved in different backdoor watermarks. In the BadNets, blended attack, WaNet, and UBW-P, the labels of poisoned samples are inconsistent with their ground-truth ones. In the label-consistent attack, Sleeper Agent, and UBW-C, the labels of poisoned samples are the same as their ground-truth ones. In particular, the label-consistent attack can only poison samples in the target class, while other methods can modify all samples. }
\label{fig:poisoned_samples}
\vspace{-0.7em}
\end{figure}

\begin{table}[t]
\centering
\caption{The watermark performance on the CIFAR-10 dataset.}
\vspace{-0.5em}
\scalebox{0.92}{
\begin{tabular}{cc|c|ccc|c}
\toprule
\multicolumn{1}{c|}{Label Type$\downarrow$}                    & Target Type$\downarrow$               & Method$\downarrow$, Metric$\rightarrow$   & BA (\%) & ASR-A (\%) & ASR-C (\%) & $D_p$ \\ \hline
\multicolumn{2}{c|}{N/A}                                                       & No Attack        &  92.53  & N/A    & N/A      &  N/A     \\ \hline
\multicolumn{1}{c|}{\multirow{4}{*}{Poisoned-Label}} & \multirow{3}{*}{Targeted} & BadNets          &  91.52  &   100  &   100    &  0.0000    \\
\multicolumn{1}{c|}{}                              &                           & Blended      & 91.61   &   100  &   100    &   0.0000    \\
\multicolumn{1}{c|}{}                              &                           & WaNet            & 90.48   &   95.50  &   95.33    &  0.1979     \\ \cline{2-7} 
\multicolumn{1}{c|}{}              & Untargeted                & UBW-P (Ours)    & 90.59   &  92.30   &   92.51    &  2.2548    \\ \hline \hline
\multicolumn{1}{c|}{\multirow{3}{*}{Clean-Label}}  & \multirow{2}{*}{Targeted} & Label-Consistent &  82.94  &  96.00  &   95.80    &   0.9280    \\
\multicolumn{1}{c|}{}                              &                           & Sleeper Agent     &  86.06  &  70.60   &   54.46    &   1.0082    \\ \cline{2-7} 
\multicolumn{1}{c|}{}                              & Untargeted                & UBW-C (Ours)   &  86.99 &  89.80  &  87.56   &   1.2641   \\ \bottomrule
\end{tabular}
}
\label{tab:attack_cifar}
\vspace{-1.1em}
\end{table}

\subsec{Results. }
As shown in Table \ref{tab:attack_cifar}-\ref{tab:attack_imagenet}, \emph{the performance of our UBW is on par with that of baseline targeted backdoor watermarks} under both poisoned-label and clean-label settings. Especially under the clean-label setting, our \emph{UBW-C is significantly better than other watermarks with clean labels}. For example, the ASR-C increases of our method compared with label-consistent attack and Sleeper Agent are both over 55\% on ImageNet. These results verify that our \emph{UBW can implant distinctive behaviors in attacked DNNs}. In particular, our UBW has significantly higher averaged prediction dispersibility $D_p$, especially under the poisoned-label setting. For example, the $D_p$ of UBW-P is more than 10 times larger than that of all baseline attacks with poisoned labels on the CIFAR-10 dataset. These results verify that the \emph{UBW can not manipulate malicious predictions deterministically and therefore is harmless}. Moreover, we notice that the $D_p$ of label-consistent attack and Sleeper Agent is similar to that of our UBW-C to some extent. It is mostly because targeted attacks with clean labels are significantly more difficult in making all poisoned samples to the same (target) class. 

\begin{table}[!t]
\centering
\vspace{-2em}
\caption{The watermark performance on the ImageNet dataset.}
\vspace{-0.5em}
\scalebox{0.92}{
\begin{tabular}{cc|c|ccc|c}
\toprule
\multicolumn{1}{c|}{Label Type$\downarrow$}                    & Target Type$\downarrow$               & Method$\downarrow$, Metric$\rightarrow$   & BA (\%) & ASR-A (\%) & ASR-C (\%) & $D_p$ \\ \hline
\multicolumn{2}{c|}{N/A}                                                       & No Attack        &  67.30  &  N/A   &  N/A     &   N/A    \\ \hline
\multicolumn{1}{c|}{\multirow{4}{*}{Poisoned-Label}} & \multirow{3}{*}{Targeted} & BadNets          &  65.64  &  100  &   100    &    0.0000   \\
\multicolumn{1}{c|}{}                              &                           & Blended      &  65.28  &  88.00  &   85.37    &   0.3669    \\
\multicolumn{1}{c|}{}                              &                           & WaNet    &  62.56  &  78.00  &   73.17    &   0.7124    \\ \cline{2-7} 
\multicolumn{1}{c|}{}                              & Untargeted                & UBW-P (Ours)    &  62.60  &  82.00   &   82.61    &  2.7156     \\ \hline \hline
\multicolumn{1}{c|}{\multirow{3}{*}{Clean-Label}}  & \multirow{2}{*}{Targeted} & Label-Consistent&  62.36  &  30.00  & 2.78      &   1.2187    \\
\multicolumn{1}{c|}{}                              &                           & Sleeper Agent     &  56.92  &   6.00  &   2.31    &  1.0943     \\ \cline{2-7} 
\multicolumn{1}{c|}{}                              & Untargeted                & UBW-C (Ours)  &  59.64  &  74.00   &   60.00    &  2.4010     \\ \bottomrule
\end{tabular}
}
\vspace{-0.6em}
\label{tab:attack_imagenet}
\end{table}

\subsection{The Performance of UBW-based Dataset Ownership Verification}

\subsec{Settings. }
We evaluate our verification method in three representative scenarios, including \textbf{1)} independent trigger (dubbed as `Independent-T'), \textbf{2)} independent model (dubbed as `Independent-M'), and \textbf{3)} unauthorized dataset usage (dubbed as `Malicious'). In the first scenario, we query the attacked suspicious model using the trigger that is different from the one used for model training; In the second scenario, we examine the benign suspicious model using the trigger pattern; We adopt the trigger used in the training process of the watermarked suspicious model in the last scenario. We set $\tau = 0.25$ for the hypothesis-test in all cases. More detailed settings are in the appendix.

\subsec{Evaluation Metrics. }
We adopt the $\Delta P \in [-1,1]$ and the p-value $\in [0,1]$ for the evaluation. For the two independent scenarios, the smaller the $\Delta P$ and the larger the p-value, the better the verification; For the malicious one, the larger the $\Delta P$ and the smaller the p-value, the better the verification.

\subsec{Results. }
As shown in Table \ref{tab:verification_p}-\ref{tab:verification_c}, our dataset ownership verification is effective in all cases, no matter under UBW-P or UBW-C. Specifically, our method can accurately identify unauthorized dataset usage ($i.e.$, `Malicious') with high confidence ($i.e.$, $\Delta P \gg 0$ and p-value $\ll 0.01$) while does not misjudge ($i.e.$, $\Delta P$ is nearly 0 and p-value $\gg 0.05$) when there is no stealing ($i.e.$, `Independent-T' and `Independent-M'). For example, the p-values of verifying independent cases are all nearly 1 on both datasets. We notice that the verification performance under UBW-C is relatively poorer than that under UBW-P, although its performance is already capable enough for verification. However, the UBW-C is more stealthy, since the labels of poisoned samples are consistent with their ground-truth label and the trigger patterns are invisible. Users can adopt different UBWs based on their needs.

\begin{table}[!t]
\centering
\caption{The effectiveness of dataset ownership verification via UBW-P.}
\vspace{-0.5em}
\scalebox{0.9}{
\begin{tabular}{c|ccc|ccc}
\toprule
& \multicolumn{3}{c|}{CIFAR-10}        & \multicolumn{3}{c}{ImageNet}         \\ \hline
 & Independent-T & Independent-M & Malicious & Independent-T & Independent-M & Malicious \\ \hline
$\Delta P$ &       -0.0269       &       0.0024        &      0.7568     &     0.1281          &     0.0241          &     0.8000      \\
p-value      &   1.0000            &   1.0000        &     $10^{-36}$      &       0.9666        &     1.0000          &     $10^{-10}$      \\ \bottomrule
\end{tabular}
}
\vspace{-0.8em}
\label{tab:verification_p}
\end{table}

\begin{table}[!t]
\centering
\caption{The effectiveness of dataset ownership verification via UBW-C.}
\vspace{-0.5em}
\scalebox{0.9}{
\begin{tabular}{c|ccc|ccc}
\toprule
    &\multicolumn{3}{c|}{CIFAR-10}             & \multicolumn{3}{c}{ImageNet}              \\ \hline
   & Independent-T & Independent-M & Malicious & Independent-T & Independent-M & Malicious \\ \hline
$\Delta P$ &      0.1874         &      0.0171         &     0.6115     &     0.0588         &        0.1361       &  0.4836         \\
p-value        &    0.9688          &      1.0000         &    $10^{-14}$      &   0.9999    &    0.9556           &     0.0032      \\ \bottomrule
\end{tabular}
}
\vspace{-0.8em}
\label{tab:verification_c}
\end{table}

\subsection{Discussion}
\subsubsection{The Ablation Study}
In this section, we explore the effects of key hyper-parameters involved in our UBW. The detailed settings and the effects of hyper-parameters involved in ownership verification are in the appendix.

\begin{figure}[ht]
\centering
\vspace{-2em}
\begin{minipage}[t]{0.48\linewidth}
\centering
\includegraphics[width=\textwidth]{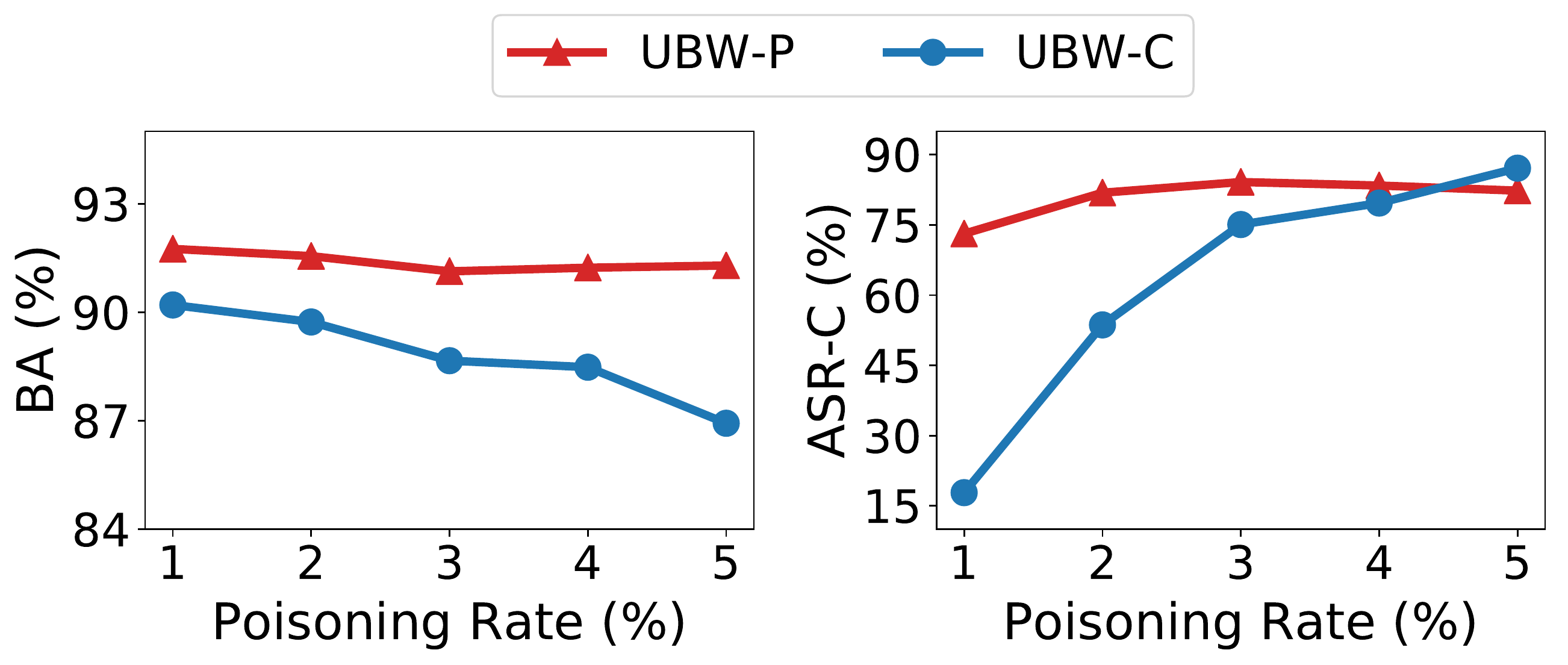}
\vspace{-1.3em}
\caption{The effects of poisoning rate $\gamma$.}
\label{fig:effects_gamma}
\end{minipage}\hspace{0.3em}
\begin{minipage}[t]{0.48\linewidth}
\centering
\includegraphics[width=\textwidth]{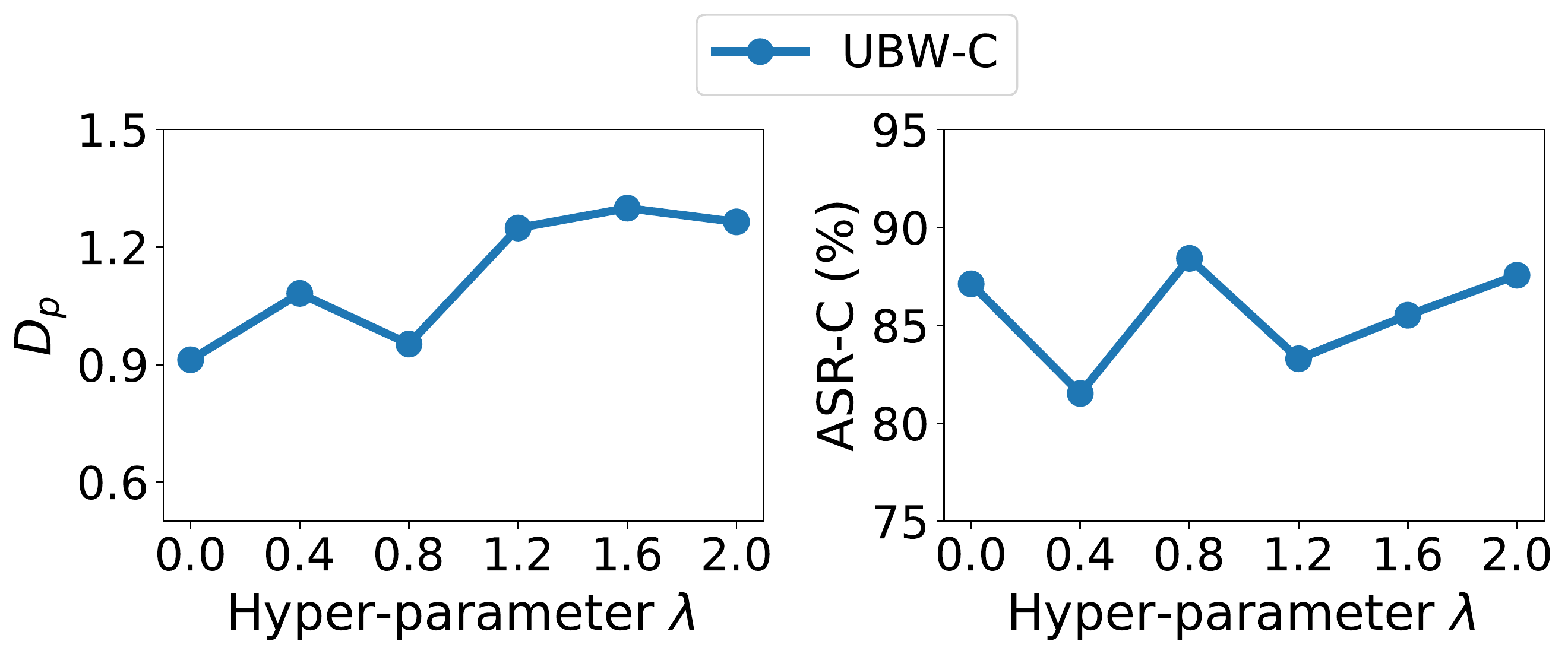}
\vspace{-1.3em}
\caption{The effects of hyper-parameter $\lambda$.}
\label{fig:effects_lambda}
\end{minipage}
\end{figure}

\begin{figure}[ht]
\centering
\vspace{-0.7em}
\begin{minipage}[t]{0.48\linewidth}
\centering
\includegraphics[width=\textwidth]{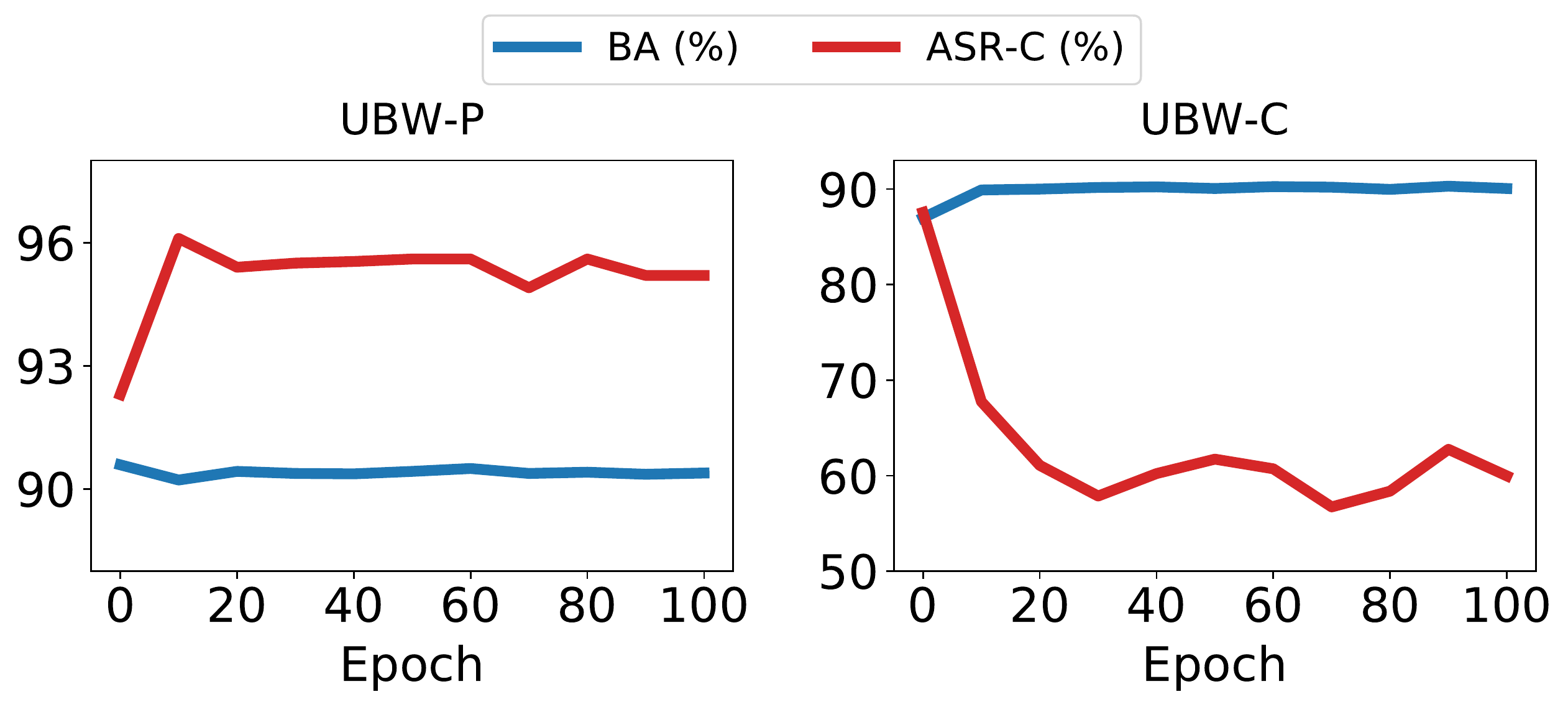}
\vspace{-1.3em}
\caption{The resistance to fine-tuning.}
\label{fig:resistance_tuning}
\end{minipage}\hspace{0.3em}
\begin{minipage}[t]{0.48\linewidth}
\centering
\includegraphics[width=\textwidth]{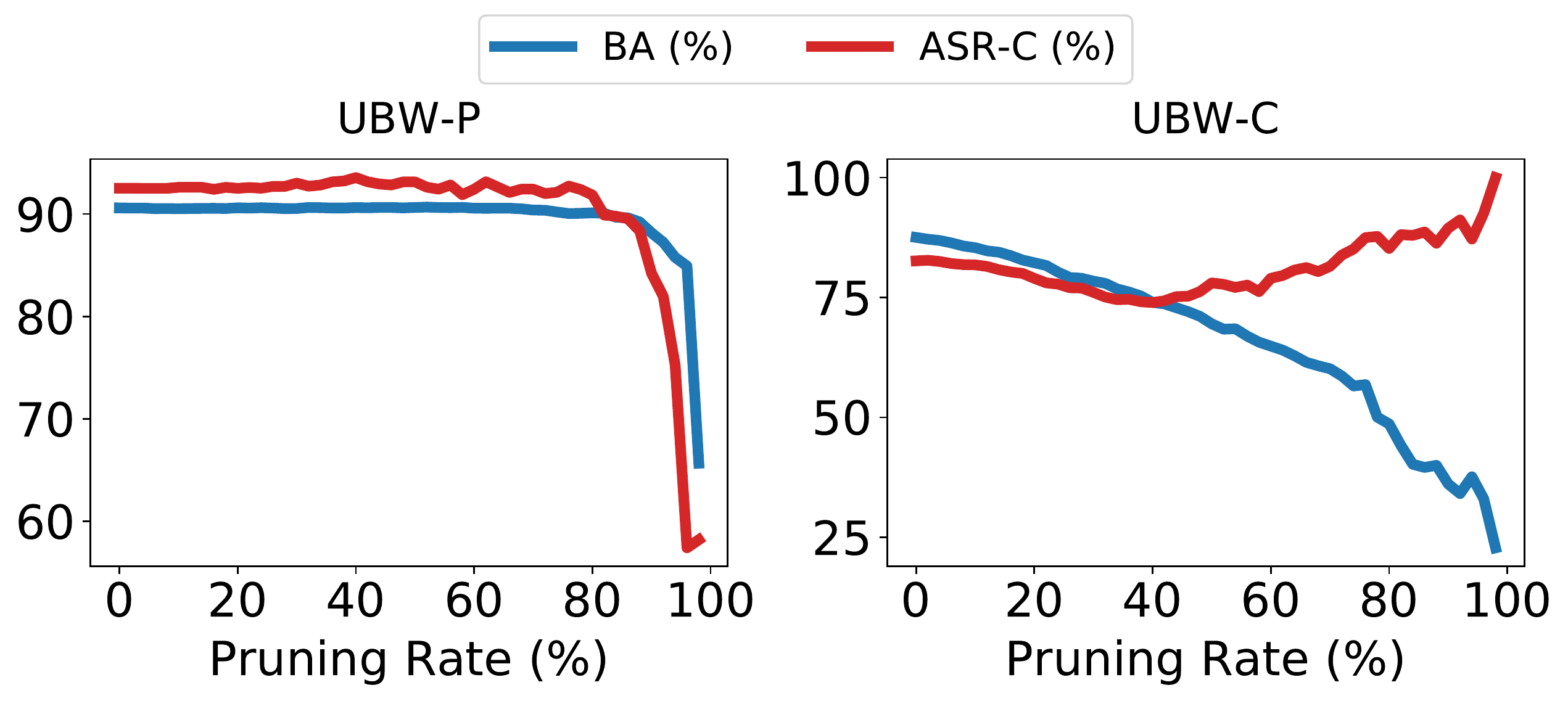}
\vspace{-1.3em}
\caption{The resistance to model pruning.}
\label{fig:resistance_pruning}
\end{minipage}
\vspace{-1em}
\end{figure}

\subsec{Effects of Poisoning Rate $\gamma$. }
As shown in Figure \ref{fig:effects_gamma}, the attack success rate (ASR) increases with the increase of the poisoning rate $\gamma$. Both UBW-P and UBW-C reach promising ASR even when $\gamma$ is small ($e.g.$, 0.03). Besides, the benign accuracy decreases with the increase of $\gamma$. Users should assign $\gamma$ based on their specific requirements in practice. 

\subsec{Effects of Trade-off Hyper-parameter $\lambda$. }
As shown in Figure \ref{fig:effects_lambda}, the averaged prediction dispersibility $D_p$ increases with the increase of $\lambda$. This phenomenon indicates that the averaged sample-wise dispersibility $D_s$ used in our UBW-C is a good approximation of $D_p$. In contrast, increasing $\lambda$ has minor effects in ASR, which is probably because the untargeted attack scheme is more stable.

\subsubsection{Resistance to Backdoor Defenses}
\label{sec:resistance_defenses}
In this section, we discuss whether our UBW is resistant to existing backdoor defenses so that it can still provide promising dataset protection even under adaptive opposite methods. In particular, the trigger patterns used by our UBW-C are \emph{sample-specific}, where different poisoned images contain different triggers (as shown in Figure \ref{fig:poisoned_samples}). Recently, ISSBA \cite{li2021invisible} revealed that most of the existing defenses ($e.g.$, Neural Cleanse \cite{wang2019neural}, SentiNet \cite{chou2020sentinet}, and STRIP \cite{gao2021design}) have a latent assumption that the trigger patterns are \emph{sample-agnostic}. Accordingly, our \emph{UBW-C can naturally bypass them}, since it breaks their fundamental assumption. Here we explore the resistance of our UBW to fine-tuning \cite{liu2017neural,liu2018fine} and model pruning \cite{liu2018fine,wu2021adversarial}, which are the representative defenses whose effects did not rely on this assumption. The detailed settings and resistance to other defenses are in the appendix.

As shown in Figure \ref{fig:resistance_tuning}, our \emph{UBW is resistant to fine-tuning}. Specifically, the attack success rates are still larger than $55\%$ for both UBW-P and UBW-C after the fine-tuning process is finished. Besides, our \emph{UBW is also resistant to model pruning} (as shown in Figure \ref{fig:resistance_pruning}). The ASRs of both UBW-P and UBW-C are larger than $50\%$ even under high pruning rates, where the benign accuracies are already low. An interesting phenomenon is that as the pruning rate increases, the ASR of UBW-C even increases for a period. We speculate that it is probably because our UBW-C is untargeted and sample-specific, and therefore it can reach better attack effects when the model's benign functions are significantly depressed. We will further discuss its mechanism in our future work.

\section{Societal Impacts}
\label{sec:impacts}
This paper is the first attempt toward untargeted backdoor attacks and their positive applications. In general, our main focus is how to design and use untargeted backdoor attacks as harmless and stealthy watermarks for dataset protection, which has positive societal impacts. We notice that our untargeted backdoor watermark (UBW) is resistant to existing backdoor defenses and could be maliciously used by the backdoor adversaries. However, compared with existing targeted backdoor attacks, our UBW is untargeted and therefore has minor threats. Moreover, although an effective defense is yet to be developed, people can still mitigate or even avoid the threats by only using trusted training resources. 

\section{Conclusion}
In this paper, we revisited how to protect the copyrights of (open-sourced) datasets. We revealed that existing dataset ownership verification could introduce new serious risks, due to the targeted nature of existing poison-only backdoor attacks used for dataset watermarking. Based on this understanding, we explored the untargeted backdoor watermark (UBW) paradigm under both poisoned-label and clean-label settings, whose abnormal model behaviors were not deterministic. We also studied how to exploit our UBW for harmless and stealthy dataset ownership verification. Experiments on benchmark datasets validated the effectiveness of our method and its resistance to backdoor defenses.

\section*{Acknowledgments}
This work is supported in part by the National Natural Science Foundation of China under Grant 62171248, the PCNL Key Project (PCL2021A07), the Tencent Rhino-Bird Research Program, and the C3 AI and Amazon research awards. We also sincerely thank Linghui Zhu from Tsinghua University for her assistance in the experiments of resistance to saliency-based backdoor defenses.

\bibliographystyle{unsrt}
\bibliography{reference}

\begin{thebibliography}{10}

\bibitem{krizhevsky2009learning}
Alex Krizhevsky.
\newblock Learning multiple layers of features from tiny images.
\newblock Technical report, 2009.

\bibitem{deng2009imagenet}
Jia Deng, Wei Dong, Richard Socher, Li-Jia Li, Kai Li, and Li~Fei-Fei.
\newblock Imagenet: A large-scale hierarchical image database.
\newblock In {\em CVPR}, 2009.

\bibitem{li2020open}
Yiming Li, Ziqi Zhang, Jiawang Bai, Baoyuan Wu, Yong Jiang, and Shu-Tao Xia.
\newblock Open-sourced dataset protection via backdoor watermarking.
\newblock In {\em NeurIPS Workshop}, 2020.

\bibitem{li2022black}
Yiming Li, Mingyan Zhu, Xue Yang, Yong Jiang, and Shu-Tao Xia.
\newblock Black-box ownership verification for dataset protection via backdoor
  watermarking.
\newblock {\em arXiv preprint arXiv:2209.06015}, 2022.

\bibitem{li2022backdoor}
Yiming Li, Yong Jiang, Zhifeng Li, and Shu-Tao Xia.
\newblock Backdoor learning: A survey.
\newblock {\em IEEE Transactions on Neural Networks and Learning Systems},
  2022.

\bibitem{hogg2005introduction}
Robert~V Hogg, Joseph McKean, and Allen~T Craig.
\newblock {\em Introduction to mathematical statistics}.
\newblock Pearson Education, 2005.

\bibitem{xiang2021backdoor}
Zhen Xiang, David~J Miller, Siheng Chen, Xi~Li, and George Kesidis.
\newblock A backdoor attack against 3d point cloud classifiers.
\newblock In {\em ICCV}, 2021.

\bibitem{carlini2022poisoning}
Nicholas Carlini and Andreas Terzis.
\newblock Poisoning and backdooring contrastive learning.
\newblock In {\em ICLR}, 2022.

\bibitem{li2022few}
Yiming Li, Haoxiang Zhong, Xingjun Ma, Yong Jiang, and Shu-Tao Xia.
\newblock Few-shot backdoor attacks on visual object tracking.
\newblock In {\em ICLR}, 2022.

\bibitem{rivest1992md5}
Ronald Rivest.
\newblock The md5 message-digest algorithm.
\newblock Technical report, 1992.

\bibitem{boneh2001identity}
Dan Boneh and Matt Franklin.
\newblock Identity-based encryption from the weil pairing.
\newblock In {\em CRYPTO}, 2001.

\bibitem{martins2017survey}
Paulo Martins, Leonel Sousa, and Artur Mariano.
\newblock A survey on fully homomorphic encryption: An engineering perspective.
\newblock {\em ACM Computing Surveys}, 50(6):1--33, 2017.

\bibitem{xiong2020adgan}
Zuobin Xiong, Zhipeng Cai, Qilong Han, Arwa Alrawais, and Wei Li.
\newblock Adgan: protect your location privacy in camera data of auto-driving
  vehicles.
\newblock {\em IEEE Transactions on Industrial Informatics}, 17(9):6200--6210,
  2020.

\bibitem{li2021visual}
Yiming Li, Peidong Liu, Yong Jiang, and Shu-Tao Xia.
\newblock Visual privacy protection via mapping distortion.
\newblock In {\em ICASSP}, 2021.

\bibitem{cai2021generative}
Zhipeng Cai, Zuobin Xiong, Honghui Xu, Peng Wang, Wei Li, and Yi~Pan.
\newblock Generative adversarial networks: A survey toward private and secure
  applications.
\newblock {\em ACM Computing Surveys}, 54(6):1--38, 2021.

\bibitem{swanson1998multimedia}
Mitchell~D Swanson, Mei Kobayashi, and Ahmed~H Tewfik.
\newblock Multimedia data-embedding and watermarking technologies.
\newblock {\em Proceedings of the IEEE}, 86(6):1064--1087, 1998.

\bibitem{guo2018halftone}
Yuanfang Guo, Oscar~C Au, Rui Wang, Lu~Fang, and Xiaochun Cao.
\newblock Halftone image watermarking by content aware double-sided embedding
  error diffusion.
\newblock {\em IEEE Transactions on Image Processing}, 27(7):3387--3402, 2018.

\bibitem{abdelnabi2021adversarial}
Sahar Abdelnabi and Mario Fritz.
\newblock Adversarial watermarking transformer: Towards tracing text provenance
  with data hiding.
\newblock In {\em IEEE S\&P}, 2021.

\bibitem{wang2021faketagger}
Run Wang, Felix Juefei-Xu, Meng Luo, Yang Liu, and Lina Wang.
\newblock Faketagger: Robust safeguards against deepfake dissemination via
  provenance tracking.
\newblock In {\em ACM MM}, 2021.

\bibitem{guan2022deepmih}
Zhenyu Guan, Junpeng Jing, Xin Deng, Mai Xu, Lai Jiang, Zhou Zhang, and Yipeng
  Li.
\newblock Deepmih: Deep invertible network for multiple image hiding.
\newblock {\em IEEE Transactions on Pattern Analysis and Machine Intelligence},
  2022.

\bibitem{shokri2017membership}
Reza Shokri, Marco Stronati, Congzheng Song, and Vitaly Shmatikov.
\newblock Membership inference attacks against machine learning models.
\newblock In {\em IEEE S\&P}, 2017.

\bibitem{gong2016you}
Neil~Zhenqiang Gong and Bin Liu.
\newblock You are who you know and how you behave: Attribute inference attacks
  via users' social friends and behaviors.
\newblock In {\em USENIX Security}, 2016.

\bibitem{zhu2019deep}
Ligeng Zhu, Zhijian Liu, and Song Han.
\newblock Deep leakage from gradients.
\newblock In {\em NeurIPS}, 2019.

\bibitem{dwork2014algorithmic}
Cynthia Dwork and Aaron Roth.
\newblock The algorithmic foundations of differential privacy.
\newblock {\em Theoretical Computer Science}, 9(3-4):211--407, 2014.

\bibitem{zhu2021fine}
Linghui Zhu, Xinyi Liu, Yiming Li, Xue Yang, Shu-Tao Xia, and Rongxing Lu.
\newblock A fine-grained differentially private federated learning against
  leakage from gradients.
\newblock {\em IEEE Internet of Things Journal}, 2021.

\bibitem{bai2022multinomial}
Jiawang Bai, Yiming Li, Jiawei Li, Xue Yang, Yong Jiang, and Shu-Tao Xia.
\newblock Multinomial random forest.
\newblock {\em Pattern Recognition}, 122:108331, 2022.

\bibitem{adi2018turning}
Yossi Adi, Carsten Baum, Moustapha Cisse, Benny Pinkas, and Joseph Keshet.
\newblock Turning your weakness into a strength: Watermarking deep neural
  networks by backdooring.
\newblock In {\em USENIX Security}, 2018.

\bibitem{jia2021entangled}
Hengrui Jia, Christopher~A Choquette-Choo, Varun Chandrasekaran, and Nicolas
  Papernot.
\newblock Entangled watermarks as a defense against model extraction.
\newblock In {\em USENIX Security}, 2021.

\bibitem{li2022defending}
Yiming Li, Linghui Zhu, Xiaojun Jia, Yong Jiang, Shu-Tao Xia, and Xiaochun Cao.
\newblock Defending against model stealing via verifying embedded external
  features.
\newblock In {\em AAAI}, 2022.

\bibitem{gu2019badnets}
Tianyu Gu, Kang Liu, Brendan Dolan-Gavitt, and Siddharth Garg.
\newblock Badnets: Evaluating backdooring attacks on deep neural networks.
\newblock {\em IEEE Access}, 7:47230--47244, 2019.

\bibitem{turner2019label}
Alexander Turner, Dimitris Tsipras, and Aleksander Madry.
\newblock Label-consistent backdoor attacks.
\newblock {\em arXiv preprint arXiv:1912.02771}, 2019.

\bibitem{nguyen2021wanet}
Anh Nguyen and Anh Tran.
\newblock Wanet--imperceptible warping-based backdoor attack.
\newblock In {\em ICLR}, 2021.

\bibitem{saha2020hidden}
Aniruddha Saha, Akshayvarun Subramanya, and Hamed Pirsiavash.
\newblock Hidden trigger backdoor attacks.
\newblock In {\em AAAI}, 2020.

\bibitem{zeng2021rethinking}
Yi~Zeng, Won Park, Z~Morley Mao, and Ruoxi Jia.
\newblock Rethinking the backdoor attacks' triggers: A frequency perspective.
\newblock In {\em ICCV}, 2021.

\bibitem{shumailov2021manipulating}
Ilia Shumailov, Zakhar Shumaylov, Dmitry Kazhdan, Yiren Zhao, Nicolas Papernot,
  Murat~A Erdogdu, and Ross Anderson.
\newblock Manipulating sgd with data ordering attacks.
\newblock In {\em NeurIPS}, 2021.

\bibitem{rakin2020tbt}
Adnan~Siraj Rakin, Zhezhi He, and Deliang Fan.
\newblock Tbt: Targeted neural network attack with bit trojan.
\newblock In {\em CVPR}, 2020.

\bibitem{wang2022stealthy}
Yajie Wang, Kongyang Chen, Shuxin Huang, Wencong Ma, Yuanzhang Li, et~al.
\newblock Stealthy and flexible trojan in deep learning framework.
\newblock {\em IEEE Transactions on Dependable and Secure Computing}, 2022.

\bibitem{qi2022towards}
Xiangyu Qi, Tinghao Xie, Ruizhe Pan, Jifeng Zhu, Yong Yang, and Kai Bu.
\newblock Towards practical deployment-stage backdoor attack on deep neural
  networks.
\newblock In {\em CVPR}, 2022.

\bibitem{chen2017targeted}
Xinyun Chen, Chang Liu, Bo~Li, Kimberly Lu, and Dawn Song.
\newblock Targeted backdoor attacks on deep learning systems using data
  poisoning.
\newblock {\em arXiv preprint arXiv:1712.05526}, 2017.

\bibitem{souri2022sleeper}
Hossein Souri, Micah Goldblum, Liam Fowl, Rama Chellappa, and Tom Goldstein.
\newblock Sleeper agent: Scalable hidden trigger backdoors for neural networks
  trained from scratch.
\newblock In {\em NeurIPS}, 2022.

\bibitem{huang2022backdoor}
Kunzhe Huang, Yiming Li, Baoyuan Wu, Zhan Qin, and Kui Ren.
\newblock Backdoor defense via decoupling the training process.
\newblock In {\em ICLR}, 2022.

\bibitem{li2021invisible}
Yuezun Li, Yiming Li, Baoyuan Wu, Longkang Li, Ran He, and Siwei Lyu.
\newblock Invisible backdoor attack with sample-specific triggers.
\newblock In {\em ICCV}, 2021.

\bibitem{kullback1997information}
Solomon Kullback.
\newblock {\em Information theory and statistics}.
\newblock Courier Corporation, 1997.

\bibitem{liu2021investigating}
Risheng Liu, Jiaxin Gao, Jin Zhang, Deyu Meng, and Zhouchen Lin.
\newblock Investigating bi-level optimization for learning and vision from a
  unified perspective: A survey and beyond.
\newblock {\em IEEE Transactions on Pattern Analysis and Machine Intelligence},
  2021.

\bibitem{ruder2016overview}
Sebastian Ruder.
\newblock An overview of gradient descent optimization algorithms.
\newblock {\em arXiv preprint arXiv:1609.04747}, 2016.

\bibitem{he2016deep}
Kaiming He, Xiangyu Zhang, Shaoqing Ren, and Jian Sun.
\newblock Deep residual learning for image recognition.
\newblock In {\em CVPR}, 2016.

\bibitem{chrabaszcz2017downsampled}
Patryk Chrabaszcz, Ilya Loshchilov, and Frank Hutter.
\newblock A downsampled variant of imagenet as an alternative to the cifar
  datasets.
\newblock {\em arXiv preprint arXiv:1707.08819}, 2017.

\bibitem{wang2019neural}
Bolun Wang, Yuanshun Yao, Shawn Shan, Huiying Li, Bimal Viswanath, Haitao
  Zheng, and Ben~Y Zhao.
\newblock Neural cleanse: Identifying and mitigating backdoor attacks in neural
  networks.
\newblock In {\em IEEE S\&P}, 2019.

\bibitem{chou2020sentinet}
Edward Chou, Florian Tramer, and Giancarlo Pellegrino.
\newblock Sentinet: Detecting localized universal attacks against deep learning
  systems.
\newblock In {\em IEEE S\&P Workshop}, 2020.

\bibitem{gao2021design}
Yansong Gao, Yeonjae Kim, Bao~Gia Doan, Zhi Zhang, Gongxuan Zhang, Surya Nepal,
  Damith Ranasinghe, and Hyoungshick Kim.
\newblock Design and evaluation of a multi-domain trojan detection method on
  deep neural networks.
\newblock {\em IEEE Transactions on Dependable and Secure Computing}, 2021.

\bibitem{liu2017neural}
Yuntao Liu, Yang Xie, and Ankur Srivastava.
\newblock Neural trojans.
\newblock In {\em ICCD}, 2017.

\bibitem{liu2018fine}
Kang Liu, Brendan Dolan-Gavitt, and Siddharth Garg.
\newblock Fine-pruning: Defending against backdooring attacks on deep neural
  networks.
\newblock In {\em RAID}, 2018.

\bibitem{wu2021adversarial}
Dongxian Wu and Yisen Wang.
\newblock Adversarial neuron pruning purifies backdoored deep models.
\newblock In {\em NeurIPS}, 2021.

\bibitem{doan2021lira}
Khoa Doan, Yingjie Lao, Weijie Zhao, and Ping Li.
\newblock Lira: Learnable, imperceptible and robust backdoor attacks.
\newblock In {\em ICCV}, 2021.

\bibitem{li2022backdoorbox}
Yiming Li, Mengxi Ya, Yang Bai, Yong Jiang, and Shu-Tao Xia.
\newblock {BackdoorBox}: A python toolbox for backdoor learning.
\newblock 2022.

\bibitem{madry2018towards}
Aleksander Madry, Aleksandar Makelov, Ludwig Schmidt, Dimitris Tsipras, and
  Adrian Vladu.
\newblock Towards deep learning models resistant to adversarial attacks.
\newblock In {\em ICLR}, 2018.

\bibitem{he2017channel}
Yihui He, Xiangyu Zhang, and Jian Sun.
\newblock Channel pruning for accelerating very deep neural networks.
\newblock In {\em ICCV}, 2017.

\bibitem{guo2022aeva}
Junfeng Guo, Ang Li, and Cong Liu.
\newblock Aeva: Black-box backdoor detection using adversarial extreme value
  analysis.
\newblock In {\em ICLR}, 2022.

\bibitem{tao2022better}
Guanhong Tao, Guangyu Shen, Yingqi Liu, Shengwei An, Qiuling Xu, Shiqing Ma,
  Pan Li, and Xiangyu Zhang.
\newblock Better trigger inversion optimization in backdoor scanning.
\newblock In {\em CVPR}, 2022.

\bibitem{huang2019neuroninspect}
Xijie Huang, Moustafa Alzantot, and Mani Srivastava.
\newblock Neuroninspect: Detecting backdoors in neural networks via output
  explanations.
\newblock {\em arXiv preprint arXiv:1911.07399}, 2019.

\bibitem{doan2020februus}
Bao~Gia Doan, Ehsan Abbasnejad, and Damith~C. Ranasinghe.
\newblock Februus: Input purification defense against trojan attacks on deep
  neural network systems.
\newblock In {\em ACSAC}, 2020.

\bibitem{selvaraju2017grad}
Ramprasaath~R Selvaraju, Michael Cogswell, Abhishek Das, Ramakrishna Vedantam,
  Devi Parikh, and Dhruv Batra.
\newblock Grad-cam: Visual explanations from deep networks via gradient-based
  localization.
\newblock In {\em ICCV}, 2017.

\bibitem{tran2018spectral}
Brandon Tran, Jerry Li, and Aleksander Madry.
\newblock Spectral signatures in backdoor attacks.
\newblock In {\em NeurIPS}, 2018.

\bibitem{chen2019detecting}
Bryant Chen, Wilka Carvalho, Nathalie Baracaldo, Heiko Ludwig, Benjamin
  Edwards, Taesung Lee, Ian Molloy, and Biplav Srivastava.
\newblock Detecting backdoor attacks on deep neural networks by activation
  clustering.
\newblock In {\em AAAI Workshop}, 2019.

\bibitem{zhao2020bridging}
Pu~Zhao, Pin-Yu Chen, Payel Das, Karthikeyan~Natesan Ramamurthy, and Xue Lin.
\newblock Bridging mode connectivity in loss landscapes and adversarial
  robustness.
\newblock In {\em ICLR}, 2020.

\bibitem{li2021neural}
Yige Li, Xixiang Lyu, Nodens Koren, Lingjuan Lyu, Bo~Li, and Xingjun Ma.
\newblock Neural attention distillation: Erasing backdoor triggers from deep
  neural networks.
\newblock In {\em ICLR}, 2021.

\bibitem{bai2020targeted}
Jiawang Bai, Bin Chen, Yiming Li, Dongxian Wu, Weiwei Guo, Shu-tao Xia, and
  En-hui Yang.
\newblock Targeted attack for deep hashing based retrieval.
\newblock In {\em ECCV}, 2020.

\bibitem{croce2020reliable}
Francesco Croce and Matthias Hein.
\newblock Reliable evaluation of adversarial robustness with an ensemble of
  diverse parameter-free attacks.
\newblock In {\em ICML}, 2020.

\bibitem{chen2022adversarial}
Bin Chen, Yan Feng, Tao Dai, Jiawang Bai, Yong Jiang, Shu-Tao Xia, and Xuan
  Wang.
\newblock Adversarial examples generation for deep product quantization
  networks on image retrieval.
\newblock {\em IEEE Transactions on Pattern Analysis and Machine Intelligence},
  2022.

\bibitem{chen2017zoo}
Pin-Yu Chen, Huan Zhang, Yash Sharma, Jinfeng Yi, and Cho-Jui Hsieh.
\newblock Zoo: Zeroth order optimization based black-box attacks to deep neural
  networks without training substitute models.
\newblock In {\em CCS Workshop}, 2017.

\bibitem{andriushchenko2020square}
Maksym Andriushchenko, Francesco Croce, Nicolas Flammarion, and Matthias Hein.
\newblock Square attack: a query-efficient black-box adversarial attack via
  random search.
\newblock In {\em ECCV}, 2020.

\bibitem{feng2022boosting}
Yan Feng, Baoyuan Wu, Yanbo Fan, Li~Liu, Zhifeng Li, and Shutao Xia.
\newblock Boosting black-box attack with partially transferred conditional
  adversarial distribution.
\newblock In {\em CVPR}, 2022.

\bibitem{xiao2015feature}
Huang Xiao, Battista Biggio, Gavin Brown, Giorgio Fumera, Claudia Eckert, and
  Fabio Roli.
\newblock Is feature selection secure against training data poisoning?
\newblock In {\em ICML}, 2015.

\bibitem{koh2017understanding}
Pang~Wei Koh and Percy Liang.
\newblock Understanding black-box predictions via influence functions.
\newblock In {\em ICML}, 2017.

\bibitem{feng2019learning}
Ji~Feng, Qi-Zhi Cai, and Zhi-Hua Zhou.
\newblock Learning to confuse: generating training time adversarial data with
  auto-encoder.
\newblock In {\em NeurIPS}, 2019.

\bibitem{shafahi2018poison}
Ali Shafahi, W~Ronny Huang, Mahyar Najibi, Octavian Suciu, Christoph Studer,
  Tudor Dumitras, and Tom Goldstein.
\newblock Poison frogs! targeted clean-label poisoning attacks on neural
  networks.
\newblock In {\em NeurIPS}, 2018.

\bibitem{geiping2021witches}
Jonas Geiping, Liam Fowl, W~Ronny Huang, Wojciech Czaja, Gavin Taylor, Michael
  Moeller, and Tom Goldstein.
\newblock Witches' brew: Industrial scale data poisoning via gradient matching.
\newblock In {\em ICLR}, 2021.

\bibitem{schwarzschild2021just}
Avi Schwarzschild, Micah Goldblum, Arjun Gupta, John~P Dickerson, and Tom
  Goldstein.
\newblock Just how toxic is data poisoning? a unified benchmark for backdoor
  and data poisoning attacks.
\newblock In {\em ICML}, 2021.

\bibitem{sablayrolles2020radioactive}
Alexandre Sablayrolles, Matthijs Douze, Cordelia Schmid, and Herv{\'e}
  J{\'e}gou.
\newblock Radioactive data: tracing through training.
\newblock In {\em ICML}, 2020.

\end{thebibliography}

\newpage

\section*{Checklist}

\begin{enumerate}

\item For all authors...
\begin{enumerate}
  \item Do the main claims made in the abstract and introduction accurately reflect the paper's contributions and scope?
    \answerYes{\textbf{The main claims met the contribution part in the introduction.} } 
  \item Did you describe the limitations of your work?
    \answerYes{\textbf{We discussed them in Section \ref{sec:impacts}.}} 
  \item Did you discuss any potential negative societal impacts of your work?
    \answerYes{\textbf{We discussed them in Section \ref{sec:impacts}.}}
  \item Have you read the ethics review guidelines and ensured that your paper conforms to them?
    \answerYes{\textbf{We read and ensured that our paper conforms to them.}} 
\end{enumerate}

\item If you are including theoretical results...
\begin{enumerate}
  \item Did you state the full set of assumptions of all theoretical results?
        \answerYes{\textbf{We included them in Section \ref{sec:UBW-C} and the appendix. }} 
  \item Did you include complete proofs of all theoretical results?
        \answerYes{\textbf{We included them in the appendix.}} 
\end{enumerate}

\item If you ran experiments...
\begin{enumerate}
  \item Did you include the code, data, and instructions needed to reproduce the main experimental results (either in the supplemental material or as a URL)?
    \answerYes{\textbf{We included them in the appendix. }}
  \item Did you specify all the training details (e.g., data splits, hyperparameters, how they were chosen)?
    \answerYes{\textbf{We included them in the appendix. }}
	\item Did you report error bars (e.g., with respect to the random seed after running experiments multiple times)?
    \answerNo{\textbf{Due to the limitation of time and computational resources, we did not report it in the submission.}}
	\item Did you include the total amount of compute and the type of resources used (e.g., type of GPUs, internal cluster, or cloud provider)?
    \answerYes{\textbf{We included them in the appendix. }}
\end{enumerate}

\item If you are using existing assets (e.g., code, data, models) or curating/releasing new assets...
\begin{enumerate}
  \item If your work uses existing assets, did you cite the creators?
    \answerYes{\textbf{We cited their papers or provided their links. }}
  \item Did you mention the license of the assets?
    \answerYes{\textbf{We included them in the appendix and we also cited their papers or provided their links. }}
  \item Did you include any new assets either in the supplemental material or as a URL?
    \answerYes{\textbf{We provided the codes and trained models in the appendix. }}
  \item Did you discuss whether and how consent was obtained from people whose data you're using/curating?
    \answerYes{\textbf{We discussed it in the appendix. }}
  \item Did you discuss whether the data you are using/curating contains personally identifiable information or offensive content?
    \answerNA{\bf We did not use the data which contains personally identifiable information or offensive content.}
\end{enumerate}

\item If you used crowdsourcing or conducted research with human subjects...
\begin{enumerate}
  \item Did you include the full text of instructions given to participants and screenshots, if applicable?
    \answerNA{\textbf{We did not use crowdsourcing or conduct research with human subjects. }}
  \item Did you describe any potential participant risks, with links to Institutional Review Board (IRB) approvals, if applicable?
    \answerNA{\textbf{We did not use crowdsourcing or conduct research with human subjects. }}
  \item Did you include the estimated hourly wage paid to participants and the total amount spent on participant compensation?
    \answerNA{\textbf{We did not use crowdsourcing or conduct research with human subjects. }}
\end{enumerate}

\end{enumerate}

\newpage

\appendix
\setcounter{equation}{0}
\setcounter{lemma}{0}
\setcounter{theorem}{0}

\section{The Omitted Proofs}

\begin{lemma}
\label{appendix:lemma1}
The averaged class-wise dispersibility is always greater than the averaged sample-wise dispersibility divided by $N$, $i.e.$, 
$D_c > \frac{1}{N} \cdot D_s$.
\end{lemma}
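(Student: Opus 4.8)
The plan is to express both dispersibilities in a common form and then invoke the concavity of Shannon entropy. Writing $N_j \triangleq \sum_{k=1}^N \mathbb{I}\{y_k=j\}$ for the size of class $j$ and $\bar{\bm{p}}_j \triangleq \frac{1}{N_j}\sum_{k:\,y_k=j} f(\bm{x}_k)$ for the class-averaged prediction vector, the definitions collapse to $D_c = \frac{1}{N}\sum_{j=1}^K N_j\, H(\bar{\bm{p}}_j)$ and $D_s = \frac{1}{N}\sum_{j=1}^K \sum_{k:\,y_k=j} H(f(\bm{x}_k))$, so that the two quantities now differ only in whether the entropy is taken of the class average or averaged over the class. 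This rewriting is the step that makes the comparison transparent.

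The core of the argument is Jensen's inequality. Since the Shannon entropy $H$ is concave on the probability simplex and $\bar{\bm{p}}_j$ is a uniform average of the $N_j$ prediction vectors $\{f(\bm{x}_k): y_k=j\}$, I would apply Jensen within each class to obtain $H(\bar{\bm{p}}_j) \geq \frac{1}{N_j}\sum_{k:\,y_k=j} H(f(\bm{x}_k))$. Multiplying by $N_j$, summing over $j$, and dividing by $N$ then yields $D_c \geq \frac{1}{N}\sum_{i=1}^N H(f(\bm{x}_i)) = D_s$, a bound that is in fact stronger than the stated claim.

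It remains to convert $D_c \geq D_s$ into the strict inequality $D_c > \frac{1}{N}\cdot D_s$; this is the only delicate point. Because $f$ is a DNN terminating in a softmax layer, every prediction vector $f(\bm{x}_i)$ has strictly positive entries and is therefore never one-hot, so $H(f(\bm{x}_i))>0$ for each $i$ and hence $D_s>0$. For any dataset with $N\geq 2$ samples this gives $\frac{1}{N}\cdot D_s < D_s$, and chaining with the Jensen bound produces $\frac{1}{N}\cdot D_s < D_s \leq D_c$, which is exactly the claim. I expect the main obstacle to be precisely this strictness rather than the Jensen step: without the softmax-positivity observation the inequality can degrade to equality in the degenerate case where every sample of a class receives the same one-hot prediction, so I would state the positivity of entropy explicitly as the ingredient that rules this out.
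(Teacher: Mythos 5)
Your argument is correct and rests on the same key tool as the paper's proof---Jensen's inequality applied to the concave entropy within each class---but it is organized differently and actually proves more. You keep the Jensen bound in its exact form, $N_j\,H(\bar{\bm{p}}_j) \geq \sum_{k:\,y_k=j} H(f(\bm{x}_k))$, and sum over classes to get the clean intermediate statement $D_c \geq D_s$, then recover the claimed $D_c > \frac{1}{N}\cdot D_s$ from $D_s > 0$ and $N \geq 2$. The paper instead deliberately loosens the bound twice---replacing $\frac{1}{N_j}$ by $\frac{1}{N}$ and then a factor $N_j$ by $1$---so as to land directly on $\frac{1}{N}\cdot D_s$; its strictness is asserted at the step $\frac{1}{N_j}\sum_{k:\,y_k=j}H(f(\bm{x}_k)) > \frac{1}{N}\sum_{k:\,y_k=j}H(f(\bm{x}_k))$, which tacitly requires both $N_j < N$ (at least two non-empty classes) and a nonzero entropy sum, neither of which is justified there. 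Your route buys a bound that is stronger by a factor of $N$ ($D_c \geq D_s$ rather than $D_c > \frac{1}{N}D_s$), and your explicit appeal to softmax positivity addresses a real gap rather than a cosmetic one: in the degenerate case where every prediction is one-hot and constant within each class, one has $D_c = \frac{1}{N}D_s = 0$ and the lemma as literally stated fails, so some positivity hypothesis of the kind you name is genuinely needed for the strict inequality.
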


\begin{proof}
Since entropy is concave \cite{kullback1997information}, according to Jensen's inequality, we have: 

\begin{equation}\label{eq:jensen}
\begin{aligned}
    H\left(\frac{\sum_{k=1}^N f(\bm{x}_k) \cdot \mathbb{I}\{y_k=j\}}{\sum_{k=1}^N \mathbb{I}\{y_k=j\}}\right) & \geq 
    \sum_{k=1}^N \frac{\mathbb{I}\{y_k=j\}}{\sum_{k=1}^N \mathbb{I}\{y_k=j\}} H\left(f(\bm{x}_k)\right) \\ 
    & > \frac{1}{N} \cdot \sum_{k=1}^N \mathbb{I}\{y_k=j\} \cdot H\left(f(\bm{x}_k)\right).
\end{aligned}
\end{equation}

Since each sample $\bm{x}$ has and only has one label $y \in \{1, \cdots, K \}$, we have:
\begin{equation}\label{eq:entropy}
    H\left(f(\bm{x}_i)\right) = \sum_{j=1}^K \mathbb{I}\{y_i=j\} \cdot H\left(f(\bm{x}_i)\right), \forall i \in \{1, \cdots, N\}.
\end{equation}

According to equation (\ref{eq:jensen}) and the definition of $D_c$, we have

\begin{equation}\label{eq:D_c}
    D_c > \frac{1}{N^2} \sum_{j=1}^K \sum_{i=1}^N \sum_{k=1}^N \mathbb{I}\{y_i=j\} \cdot \mathbb{I}\{y_k=j\} \cdot H\left(f(\bm{x}_k)\right).
\end{equation}

Due to the property of the indicator function $\mathbb{I}\{\cdot\}$, we have

\begin{equation}\label{eq:indicator}
\sum_{i=1}^N \sum_{k=1}^N \mathbb{I}\{y_i=j\} \cdot \mathbb{I}\{y_k=j\} \cdot H\left(f(\bm{x}_k)\right) \geq \sum_{i=1}^N \mathbb{I}\{y_i=j\} \cdot H\left(f(\bm{x}_i)\right).    
\end{equation}

According to equation (\ref{eq:entropy}) to equation (\ref{eq:indicator}), we have

\begin{equation}
   D_c > \frac{1}{N^2} \sum_{j=1}^K \sum_{i=1}^N \mathbb{I}\{y_i=j\} \cdot H\left(f(\bm{x}_i)\right) = \frac{1}{N^2} \sum_{i=1}^N H\left(f(\bm{x}_i)\right) = \frac{1}{N} D_s.
\end{equation}

\end{proof}

\begin{theorem}
\label{appendix:thm1}
Let $f(\cdot;\bm{w})$ denotes the DNN with parameter $\bm{w}$, $G(\cdot;\bm{\theta})$ is the poisoned image generator with parameter $\bm{\theta}$, and $\mathcal{D}=\{(\bm{x}_i, y_i)\}_{i=1}^N$ is a given dataset with $K$ different classes, we have
\begin{equation}\nonumber
\max_{\bm{\theta}} \sum_{i=1}^{N} H\left(f(G(\bm{x}_i;\bm{\theta});\bm{w})\right) < N \cdot \max_{\bm{\theta}} \sum_{j=1}^K \sum_{i=1}^N \mathbb{I}\{y_i=j\} \cdot H\left(\frac{\sum_{i=1}^N f(G(\bm{x}_i;\bm{\theta});\bm{w}) \cdot \mathbb{I}\{y_i=j\}}{\sum_{i=1}^N \mathbb{I}\{y_i=j\}}\right).   
\end{equation}
\end{theorem}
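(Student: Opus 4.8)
The plan is to reduce the statement directly to Lemma \ref{appendix:lemma1}, after recognizing that the two sides of the inequality are nothing but the sample-wise and class-wise dispersibilities of the fixed network $f(\cdot;\bm{w})$ evaluated on the \emph{poisoned} inputs $G(\bm{x}_i;\bm{\theta})$. Concretely, for each fixed $\bm{\theta}$ I would abbreviate $g_i(\bm{\theta}) \triangleq f(G(\bm{x}_i;\bm{\theta});\bm{w})$ and regard $\{(G(\bm{x}_i;\bm{\theta}), y_i)\}_{i=1}^N$ as the dataset in the definition of $D_s$ and $D_c$. Then, writing $D_s(\bm{\theta})$ and $D_c(\bm{\theta})$ for the resulting dispersibilities, one has $\sum_{i=1}^N H(g_i(\bm{\theta})) = N\,D_s(\bm{\theta})$ and $\sum_{j=1}^K\sum_{i=1}^N \mathbb{I}\{y_i=j\}\,H(\cdot) = N\,D_c(\bm{\theta})$. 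Substituting these identities, the claimed inequality becomes $N\max_{\bm{\theta}} D_s(\bm{\theta}) < N^2 \max_{\bm{\theta}} D_c(\bm{\theta})$, i.e.\ it suffices to prove
\begin{equation}\nonumber
\max_{\bm{\theta}} D_s(\bm{\theta}) < N \cdot \max_{\bm{\theta}} D_c(\bm{\theta}).
\end{equation}

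Next I would invoke Lemma \ref{appendix:lemma1}, which (applied to the network $f(\cdot;\bm{w})$ composed with $G(\cdot;\bm{\theta})$, for any fixed $\bm{\theta}$) gives the \emph{pointwise} strict bound $N\,D_c(\bm{\theta}) > D_s(\bm{\theta})$. The finishing step is to evaluate this at the maximizer of the sample-wise dispersibility: letting $\bm{\theta}^\star \in \arg\max_{\bm{\theta}} D_s(\bm{\theta})$, I would chain
\begin{equation}\nonumber
\max_{\bm{\theta}} D_s(\bm{\theta}) = D_s(\bm{\theta}^\star) < N\,D_c(\bm{\theta}^\star) \le N\cdot\max_{\bm{\theta}} D_c(\bm{\theta}),
\end{equation}
where the strict middle inequality is Lemma \ref{appendix:lemma1} at $\bm{\theta}^\star$ and the last inequality merely relaxes $D_c(\bm{\theta}^\star)$ to its own maximum over $\bm{\theta}$. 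Multiplying through by $N$ recovers the theorem.

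The main obstacle, and the only place where care is genuinely needed, is that the two maxima on the two sides are taken over the \emph{same} variable $\bm{\theta}$ but are in general attained at \emph{different} parameters, so one cannot naively combine them. The trick that resolves this is precisely to exploit the pointwise nature of Lemma \ref{appendix:lemma1}: by applying it at the $D_s$-maximizer $\bm{\theta}^\star$ and only afterwards enlarging $D_c(\bm{\theta}^\star)$ to $\max_{\bm{\theta}} D_c(\bm{\theta})$, the decoupled maxima are handled without ever assuming a shared optimizer. Two minor technical points remain: the existence of the maximizer $\bm{\theta}^\star$ (which I would assume via continuity of $D_s$ and compactness of the admissible parameter set, or otherwise replace $\bm{\theta}^\star$ by an $\varepsilon$-maximizing sequence and pass to the limit), and the preservation of \emph{strictness}, which is inherited directly from the strict inequality in Lemma \ref{appendix:lemma1} (valid since there are $K\ge 2$ classes, so no single class contains all samples).
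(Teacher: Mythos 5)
Your proposal is correct and follows essentially the same route as the paper's proof, which simply applies Lemma \ref{appendix:lemma1} to the poisoned inputs $f(G(\bm{x}_i;\bm{\theta});\bm{w})$, multiplies by $N$, and maximizes both sides. In fact your treatment of the maximization step (evaluating the pointwise inequality at the $D_s$-maximizer $\bm{\theta}^\star$ and only then relaxing $D_c(\bm{\theta}^\star)$ to its own maximum) is more careful than the paper's one-line justification, which glosses over the fact that the two maxima need not share an optimizer.
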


\begin{proof}
The proof is straightforward given Lemma \ref{lemma1}, based on replacing $f(\bm{x}_i)$ with $f(G(\bm{x}_i;\bm{\theta});\bm{w})$ while multiplying both sides by $N$ and maximizing both sides.  
\end{proof}

\section{The Optimization Process of our UBW-C}
\label{sec:opt_details}
Recall that the optimization objective of our UBW-C is as follows:

\begin{align}
    &\max_{\bm{\theta}} \sum_{(\bm{x},y) \in \mathcal{D}_s} \left[\mathcal{L}(f(G(\bm{x};\bm{\theta});\bm{w}^{*}), y) + \lambda \cdot H\left(f(G(\bm{x};\bm{\theta});\bm{w}^{*})\right)\right], \\
    & s.t. \ \bm{w}^{*} = \arg \min_{\bm{w}} \sum_{(\bm{x},y)\in \mathcal{D}_p} \mathcal{L}(f(\bm{x};\bm{w}),y), 
\end{align}
where $\lambda$ is a non-negative trade-off hyper-parameter.

In general, the aforementioned process is a standard bi-level optimization, which can be effectively and efficiently solved by alternatively optimizing the upper-level and lower-level sub-problems \cite{liu2021investigating}. To solve the aforementioned problem, the form of $G$ is one of the key factors. Inspired by the hidden trigger backdoor attack \cite{saha2020hidden} and the Sleeper Agent \cite{souri2022sleeper}, we also adopt different generators during the training and inference process to enhance attack effectiveness and stealthiness, as follows:

Let $G_t$ and $G_i$ denote the generator used in the training and inference process, respectively. We intend to generate \emph{sample-specific} small additive perturbations for selected training images based on $G_t$ so that their gradient ensemble has a similar direction to the gradient ensemble of poisoned `testing' images generated by $G_i$. Specifically, we set $G_t(\bm{x}) = \bm{x} + \bm{\theta}(\bm{x})$, where $||\bm{\theta}(\bm{x})||_{\infty} \leq \epsilon$ and $\epsilon$ is the perturbation budget; We set $G_i(\bm{x}) = (\bm{1}-\bm{\alpha}) \otimes \bm{x} + \bm{\alpha} \otimes \bm{t}$, where $\bm{\alpha} \in \{0,1\}^{C \times W \times H}$ denotes the given mask and $\bm{t} \in \mathcal{X}$ is the given trigger pattern. In general, the trigger patterns used for training is invisible for stealthiness while those used for inference is visible for effectiveness. The detailed lower-level and upper-level sub-problems are as follows:

\subsec{Upper-level Sub-problem. }
Given the current model parameters $\bm{w}$, we optimize the trigger patterns $\{\bm{\theta}(\bm{x})|\bm{x} \in \mathcal{D}_s \}$ of selected training samples (for poisoning) based on the gradient matching:

\begin{equation}
    \max_{\{\bm{\theta}(\bm{x})|\bm{x} \in \mathcal{D}_s,\  ||\bm{\theta}(\bm{x})||_{\infty} \leq \epsilon \}} \frac{\nabla_{\bm{w}} \mathcal{L}_t \cdot \nabla_{\bm{w}} \mathcal{L}_i}{||\mathcal{L}_t|| \cdot ||\mathcal{L}_i||},
\end{equation}
where 
\begin{equation}\label{eq:poisoned_test}
    \mathcal{L}_i = \frac{1}{N} \cdot \sum_{(\bm{x},y) \in \mathcal{D}} \left[\mathcal{L}(f(G_i(\bm{x});\bm{w}), y) + \lambda \cdot H\left(f(G_i(\bm{x});\bm{w})\right)\right], 
\end{equation}

\begin{equation}
    \mathcal{L}_t = \frac{1}{M} \cdot \sum_{(\bm{x},y) \in \mathcal{D}_s} \mathcal{L}(f(\bm{x}+\bm{\theta}(\bm{x});\bm{w}), y), 
\end{equation}
$N$ and $M$ denote the number of training samples and the number of selected samples, respectively. The upper-level sub-problem is solved by projected gradient ascend (PGA) \cite{ruder2016overview}.

\subsec{Lower-level Sub-problem. }
Given the current trigger patterns $\{\bm{\theta}(\bm{x})|\bm{x} \in \mathcal{D}_s \}$, we can obtain the poisoned training dataset $\mathcal{D}_p$ and then optimize the model parameters $\bm{w}$ via   

\begin{equation}
    \min_{\bm{w}} \sum_{(\bm{x},y)\in \mathcal{D}_p} \mathcal{L}(f(\bm{x};\bm{w}),y).
\end{equation}
The lower-level sub-problem is solved by stochastic gradient descent (SGD) \cite{ruder2016overview}.

\vspace{0.2em}
Besides, there are three additional optimization details that we need to mention, as follows:

\subsec{1) How to Select Training Samples for Poisoning. }
We select training samples with the largest gradient norms instead of random selection for poisoning since they have more influence. It is allowed in our UBW since the dataset owner can determine which samples should be modified.

\subsec{2) How to Select `Test' Samples for Poisoning. }
Instead of using all training samples to calculate Eq. (\ref{eq:poisoned_test}), we only use those from a specific source class. This approach is used to further enhance UBW effectiveness, since the gradient ensemble of samples from all classes may be too `noisy' to learn for $G_t$. Its benefits are verified in the following Section \ref{sec:effects_all}.

\subsec{3) The Relation between Dispersibility and Attack Success Rate. }
In general, the optimization of dispersibility contradicts to that of the attack success rate to some extent. Specifically, let us consider a classification problem with $K$ different classes. When the averaged sample-wise dispersibility used in optimizing UBW-C reaches its maximum value, the attack success rate is only $\frac{K-1}{K}$, since the predicted probability vectors are all uniform; When the attack success rate reaches 100\%, both averaged prediction dispersibility and sample-wise dispersibility cannot reach their maximum.

In particular, similar to other backdoor attacks based on bi-level optimization ($e.g.$, LIRA \cite{doan2021lira} and Sleeper Agent \cite{souri2022sleeper}), we notice that the watermark performance of our UBW-C is not very stable across different random seeds ($i.e.$, has relatively large standard deviation). We will explore how to stabilize and improve the performance of UBW-C in our future work.

\begin{figure}[!t]
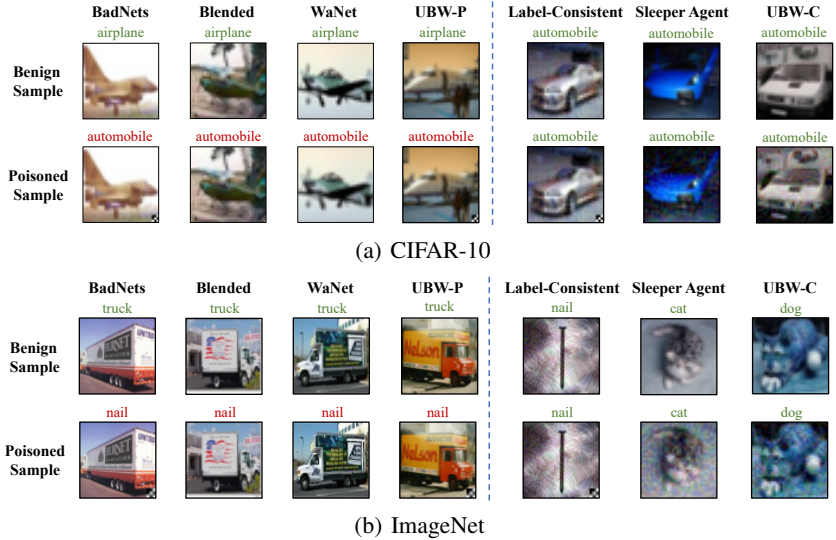

\vspace{-2.5em}
\centering
\subfigure[CIFAR-10]{
\centering
\includegraphics[width=0.8\textwidth]{Poisoned_Samples_CIFAR_new.pdf}
}\vspace{-0.7em}
\subfigure[ImageNet]{
\centering
\includegraphics[width=0.8\textwidth]{Poisoned_Samples_ImageNet_new.pdf}
}
\vspace{-0.7em}
\caption{The example of samples involved in different backdoor watermarks. In the BadNets, blended attack, WaNet, and UBW-P, the labels of poisoned samples are inconsistent with their ground-truth ones. In the label-consistent attack, Sleeper Agent, and UBW-C, the labels of poisoned samples are the same as their ground-truth ones. In particular, the label-consistent attack can only poison samples in the target class, while other methods can modify all samples. }
\label{fig:poisoned_samples_appendix}
\vspace{-0.7em}
\end{figure}

\section{Detailed Experimental Settings}
\subsection{Detailed Settings for Dataset Watermarking}
\label{sec:settings_watermark}

\subsec{Datasets and Models. }
In this paper, we conduct experiments on two classical benchmark datasets, including CIFAR-10 \cite{krizhevsky2009learning} and (a subset of) ImageNet \cite{deng2009imagenet}, with ResNet-18 \cite{he2016deep}. Specifically, we randomly select a subset containing $50$ classes with $25,000$ images from the original ImageNet for training (500 images per class) and $2,500$ images for testing (50 images per class). For simplicity, all images are resized to $3 \times 64 \times 64$, following the settings used in Tiny-ImageNet \cite{chrabaszcz2017downsampled}.

\subsec{Baseline Selection. }
We compare our UBW with representative existing poison-only backdoor attacks. Specifically, for attacks with poisoned labels, we adopt BadNets \cite{gu2019badnets}, blended attack (dubbed as `Blended') \cite{chen2017targeted}, and WaNet \cite{nguyen2021wanet} as the baseline methods. They are the representative of visible attacks, patch-based invisible attacks, and non-patch-based invisible attacks, respectively. We use the label-consistent attack (dubbed as `Label-Consistent') \cite{turner2019label} and Sleeper Agent \cite{souri2022sleeper} as the representative of attacks with clean labels. Besides, we also include the models trained on the benign dataset (dubbed as `No Attack') as another baseline for reference.

\subsec{Attack Setup. }
We implement BadNets, blended attack, and label-consistent attack based on the open-sourced Python toolbox---\texttt{BackdoorBox} \cite{li2022backdoorbox}. The experiments of Sleeper Agent are conducted based on its official open-sourced codes\footnote{\url{https://github.com/hsouri/Sleeper-Agent}}. We set the poisoning rate $\gamma = 0.1$ for all attacks on both datasets. In particular, since the label-consistent attack can only modify samples from the target class, its poisoning rate is set to its maximum ($i.e.$, 0.02) on the ImageNet dataset. The target label $y_t$ is set to 1 for all targeted attacks. Besides, following the classical settings in existing papers, we adopt a white-black square as the trigger pattern for BadNets, blended attack, label-consistent attack, and UBW-P on both datasets. The trigger patterns adopted for training Sleeper Agent and UBW-C are sample-specific, while those used in the inference process are the same as those used by BadNets, blended attack, label-consistent attack, and UBW-P. Specifically, for the blended attack, the blended ratio $\alpha$ is set to 0.1; For the label-consistent attack, we used the projected gradient descent (PGD) \cite{madry2018towards} to generate adversarial perturbations within the $\ell^{\infty}$-ball for pre-processing selected images before the poisoning, where the maximum perturbation size $\epsilon=16$, step size 1.5, and 30 steps. For the WaNet, we adopted its default settings provided by \texttt{BackdoorBox} with noise mode. For both Sleeper Agent and our UBW-C, we alternatively optimize the upper-level and lower-level sub-problems 5 times, where we train the model 50 epochs and generate the trigger patterns with PGA-40 on the CIFAR-10 dataset. On the ImageNet dataset, we alternatively optimize the upper-level and lower-level sub-problems 3 times, where we train the model 40 epochs and generate the trigger patterns via PGA-30. The initial model parameters are obtained by training on the benign dataset. We set $\lambda=2$ and the source class is set as 0 on both datasets. The example of poisoned training samples generated by different attacks is shown in Figure \ref{fig:poisoned_samples_appendix}.

\subsec{Training Setup. }
On both CIFAR-10 and ImageNet datasets, we train the model 200 epochs with batch size 128. Specifically, we use the SGD optimizer with a momentum of 0.9, weight decay of $5 \times 10^{-4}$, and an initial learning rate of 0.1. The learning rate is decreased by a factor of 10 at the epoch of 150 and 180, respectively. In particular, we add trigger patterns before performing the data augmentation with horizontal flipping.

\begin{figure}[!t]
\centering
\subfigure[]{
\centering
\includegraphics[width=0.2\textwidth]{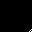}
}\hspace{1.5em}
\subfigure[]{
\centering
\includegraphics[width=0.2\textwidth]{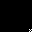}
}\hspace{1.5em}
\subfigure[]{
\centering
\includegraphics[width=0.2\textwidth]{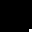}
}
\vspace{-0.7em}
\caption{The trigger patterns used for evaluation.}
\label{fig:patterns}
\end{figure}

\subsection{Detailed Settings for Dataset Ownership Verification}
\label{sec:settings_verification}

We evaluate our verification method in three representative scenarios, including \textbf{1)} independent trigger (dubbed as `Independent-T'), \textbf{2)} independent model (dubbed as `Independent-M'), and \textbf{3)} unauthorized dataset usage (dubbed as `Malicious'). In the first scenario, we query the attacked suspicious model using the trigger that is different from the one used for model training; In the second scenario, we examine the benign suspicious model using the trigger pattern; We adopt the trigger used in the training process of the watermarked suspicious model in the last scenario. Moreover, we sample $m=100$ samples on CIFAR10 and $m=30$ samples on ImageNet and set $\tau = 0.25$ for the hypothesis-test in each case for both UBW-P and UBW-C. We use $m=30$ on ImageNet since there is only 50 testing images from the source class and we only select samples that can be correctly classified by the suspicious model to reduce the side-effects of model accuracy.

\subsection{Detailed Settings for Resistance to Backdoor Defenses}

\subsec{Settings for Fine-tuning. }
We conduct the experiments on the CIFAR-10 dataset as an example for discussion. Following its default settings, we freeze the convolutional layers and tune the remaining fully-connected layers of the watermarked DNNs. Specifically, we adopt 10\% benign training samples for fine-tuning and set the learning rate as 0.1. We fine-tune the model 100 epochs in total.

\subsec{Settings for Model Pruning. }
We conduct the experiments on the CIFAR-10 dataset as an example for discussion. Following its default settings, we conduct channel pruning \cite{he2017channel} on the output of the last convolutional layer with 10\% benign training samples. The pruning rate $\beta \in \{0\%, 2\%, \cdots, 98\%\}$.

\section{The Effects of Trigger Patterns and Sizes}
\subsection{The Effects of Trigger Patterns}

\subsec{Settings. }
In this section, we conduct experiments on the CIFAR-10 dataset to discuss the effects of trigger patterns. Except for the trigger pattern, all other settings are the same as those used in Section \ref{sec:settings_watermark}. The adopted trigger patterns are shown in Figure \ref{fig:patterns}.

\vspace{0.2em}
\subsec{Results. }
As shown in Table \ref{tab: effects_pattern}, both UBW-P and UBW-C are effective with each trigger pattern, although the performance may have some fluctuations. Specifically, the ASR-As are larger than 80\% in all cases. These results verify that both UBW-P and UBW-C can reach promising performance with arbitrary user-specified trigger patterns used in the inference process.

\begin{table}[!t]
\centering
\caption{The effectiveness of our UBW with different trigger patterns on the CIFAR-10 dataset.}
\vspace{-0.5em}
\begin{tabular}{c|c|ccc|c}
\toprule
Method$\downarrow$                 & Pattern$\downarrow$, Metric$\rightarrow$     & BA (\%) & ASR-A (\%) & ASR-C (\%) & $D_p$ \\ \hline
\multirow{3}{*}{UBW-P} & Pattern (a) &   90.59 &   92.30    &    92.51   &     2.2548 \\
 & Pattern (b) &  90.31  &    84.53   &   82.39    &   2.2331   \\
& Pattern (c) &  90.21  &    87.78   &    86.94   &   2.2611   \\ \hline
\multirow{3}{*}{UBW-C} & Pattern (a) &  86.99 & 89.80  &  87.56   &   1.2641 \\
   & Pattern (b) & 86.25   &  90.90   &     88.91  &   1.1131   \\
  & Pattern (c) &  87.78  &    81.23   &   78.55    &  1.0089    \\ \bottomrule
\end{tabular}
\label{tab: effects_pattern}
\end{table}

\begin{table}[!t]
\centering
\caption{The effectiveness of our UBW with different trigger sizes on the CIFAR-10 dataset.}
\vspace{-0.5em}

\begin{tabular}{c|c|ccc|c}
\toprule
\multicolumn{1}{c|}{Method$\downarrow$}  & Trigger Size$\downarrow$, Metric$\rightarrow$   & BA (\%) & ASR-A (\%) & ASR-C (\%) & $D_p$ \\ \hline
\multirow{6}{*}{UBW-P} & 2  &  90.55  &   82.60  &  82.21  &  2.2370    \\
 & 4  &  90.37 & 83.50  & 83.30 & 2.2321   \\
 & 6  & 90.43  &  86.30  &  86.70   &  2.2546    \\ 
 & 8 & 90.46 & 86.40  & 86.26  &  2.2688     \\
 & 10  &  90.72  & 86.10 &  85.82   & 2.2761    \\ 
 & 12  & 90.22  &  88.30  &  87.94  & 2.2545     \\ \hline
\multirow{6}{*}{UBW-C} & 2  &  87.34  &   4.38  &  15.00    &   0.7065    \\
 & 4  &  87.71  &   70.80  &  64.86    &   1.2924    \\
 & 6  &  87.69  &   75.60  &  70.85    &   1.7892    \\ 
 & 8 & 88.89  &  75.40  & 69.86  &   1.2904    \\
 & 10  &  88.30  &  77.60  &  73.92   &  1.7534   \\
 & 12  &  89.29  &  98.00  &   97.72  & 1.1049     \\
\bottomrule
\end{tabular}
\label{tab: effects_size}
\end{table}

\begin{table}[ht]
\centering
\caption{The p-value of UBW-based dataset ownership verification $w.r.t.$ the verification certainty $\tau$ on the CIFAR-10 dataset.}
\vspace{-0.5em}
\begin{tabular}{c|c|cccccc}
\toprule
    Method$\downarrow$  & Scenario$\downarrow$, $\tau$$\rightarrow$               & 0                   & 0.05                   & 0.1                   & 0.15                   & 0.2 & 0.25 \\ \hline
\multirow{3}{*}{UBW-P} 
& Independent-T &    0.1705    & 1.0 &   1.0   &  1.0   &   1.0  &   1.0  \\
  & Independent-M &     0.2178   &   1.0     &  1.0   &   1.0   &   1.0  &  1.0   \\
 & Malicious     &          $10^{-51}$             &           $10^{-48}$            &          $10^{-45}$             &     $10^{-42}$                  &  $10^{-39}$    &  $10^{-36}$    \\ \hline
\multirow{3}{*}{UBW-C} 
& Independent-T &  $10^{-8}$    &  $10^{-5}$   &   0.0049  & 0.1313  &  0.6473  &  0.9688  \\
 & Independent-M & 0.1821 & 0.9835 & 1.0 & 1.0 &  1.0   & 1.0  \\
 & Malicious  & $10^{-27}$  & $10^{-24}$  & $10^{-22}$  & $10^{-19}$  &  $10^{-16}$    & $10^{-14}$     \\ \bottomrule
\end{tabular}
\label{tab:effects_certainty}
\end{table}

\subsection{The Effects of Trigger Sizes}
\subsec{Settings. }
In this section, we conduct experiments on the CIFAR-10 dataset to discuss the effects of trigger sizes. Except for the trigger size, all other settings are the same as those used in Section \ref{sec:settings_watermark}. The specific trigger patterns are generated based on resizing the one used in our main experiments.

\vspace{0.2em}
\subsec{Results. }
As shown in Table \ref{tab: effects_size}, the attack success rate increases with the increase of trigger size. In particular, different from existing (targeted) patch-based backdoor attacks ($e.g.$, BadNets and blended attack), increasing the trigger size has minor adverse effects in reducing the benign accuracy, which is most probably due to our untargeted attack paradigm. The benign accuracy even slightly increases with the increase of trigger sizes on UBW-C, which is mostly because the trigger pattern is not directly added to the poisoned samples during the training process (as described in Section \ref{sec:opt_details}).

\section{The Effects of Verification Certainty and Number of Sampled Images}
\subsection{The Effects of Verification Certainty}

\subsec{Settings. }
In this section, we conduct experiments on the CIFAR-10 dataset to discuss the effects of verification certainty $\tau$ in UBW-based dataset ownership verification. Except for the $\tau$, all other settings are the same as those used in Section \ref{sec:settings_verification}.

\vspace{0.2em}
\subsec{Results. }
As shown in Table \ref{tab:effects_certainty}, the p-value increases with the increase of verification certainty $\tau$ in all scenarios. In particular, when $\tau$ is smaller than 0.15, UBW-C will misjudge the cases of Independent-T. This failure is due to the untargeted nature of our UBW and why we introduced $\tau$ in our verification process. Besides, the larger the $\tau$, the unlikely the misjudgments happen and the more likely that the dataset stealing is ignored. People should assign $\tau$ based on their specific needs.

\subsection{The Effects of the Number of Sampled Images}

\subsec{Settings. }
In this section, we conduct experiments on the CIFAR-10 dataset to study the number of sampled images $m$ in UBW-based dataset ownership verification. Except for the $m$, all other settings are the same as those used in Section \ref{sec:settings_verification}.

\begin{table}[ht]
\centering
\caption{The p-value of UBW-based dataset ownership verification $w.r.t.$ the number of sampled images $m$ on the CIFAR-10 dataset.}
\vspace{-0.5em}
\begin{tabular}{c|c|cccccc}
\toprule
    Method$\downarrow$  & Scenario$\downarrow$, $m$$\rightarrow$               & 20                   & 40                   & 60                   & 80                   & 100 & 120 \\ \hline
\multirow{3}{*}{UBW-P} & Independent-T &           1.0  &     1.0                 &       1.0               &        1.0              &  1.0   &   1.0  \\
& Independent-M &     1.0                 &        1.0              &          1.0            &    1.0                  &  1.0   &  1.0   \\
& Malicious     &    $10^{-7}$                  &        $10^{-14}$               &        $10^{-23}$               &    $10^{-32}$                   &  $10^{-36}$    &  $10^{-42}$    \\ \hline
\multirow{3}{*}{UBW-C} & 
Independent-T &  0.9348 &   0.9219 & 0.9075  &  0.9093  &  0.9688   & 0.9770\\
&Independent-M & 1.0 & 1.0 & 1.0 & 1.0 &   1.0  & 1.0    \\
  & Malicious  & $10^{-3}$   & $10^{-6}$   &  $10^{-7}$   &  $10^{-10}$    &   $10^{-14}$    &   $10^{-16}$    \\ \bottomrule
\end{tabular}
\label{tab:effects_m}
\end{table}

\begin{table}[!ht]
\centering
\caption{The effectiveness of UBW-C when attacking all samples or samples from the source class.}
\vspace{-0.5em}
\begin{tabular}{c|c|ccc|c}
\toprule
   Dataset$\downarrow$                       & Scenario$\downarrow$, Metric$\rightarrow$       & BA (\%) & ASR-A (\%) & ASR-C (\%) & $D_p$ \\ \hline
\multirow{2}{*}{CIFAR-10} & All &    \textbf{87.42}     &  58.83    &   50.31    &  0.9843   \\
 & Source (Ours)    &  86.99       &     \textbf{89.80}       &    \textbf{87.56}        &   \textbf{1.2641}      \\ \hline
\multirow{2}{*}{ImageNet} & All &    58.64    &     42.03      &   21.27     &  2.1407    \\
  & Source (Ours)   &   \textbf{59.64}      &     \textbf{74.00}       &      \textbf{60.00}      &    \textbf{2.4010}    \\ \bottomrule
\end{tabular}
\label{tab:effects_source}
\end{table}

\vspace{0.2em}
\subsec{Results. }
As shown in Table \ref{tab:effects_m}, the p-value decreases with the increase of $m$ in the malicious scenario while it decreases with the increase of $m$ in the independent scenarios. In other words, the probability that our UBW-based dataset ownership verification makes correct judgments increases with the increase of $m$. This benefit is mostly because increasing $m$ will reduce the adverse effects of the randomness involved in the sample selection.

\section{The Effectiveness of UBW-C When Attacking All Classes}\label{sec:effects_all}
As described in Section \ref{sec:opt_details}, our UBW-C randomly selects samples from a random source class instead of all classes for gradient matching. This special design is to reduce the optimization difficulty, since the gradient ensemble of samples from different classes may be too `noisy' to learn for the poisoned training image generator $G_t$. In this section, we verify its effectiveness by comparing our UBW-C with its variant, which uses all samples for gradient matching.

As shown in Table \ref{tab:effects_source}, only using source class samples is significantly better than using all samples during the optimization of UBW-C. Specifically, the ASR-A increases of UBW-C compared with its variant are larger than 30\% on both CIFAR-10 and ImageNet. Besides, we notice that the averaged prediction dispersibility $D_p$ of the UBW-C variant is similar to that of our UBW-C to some extent. It is mostly because our UBW-C is untargeted and the variant has relatively low benign accuracy.

\section{Resistance to Other Backdoor Defenses}
In this section, we discuss the resistance of our UBW-P and UBW-C to more potential backdoor defenses. We conduct experiments on the CIFAR-10 dataset as an example for the discussion.

\subsection{Resistance to Trigger Synthesis based Defenses}
Currently, there are many trigger synthesis based backdoor defenses \cite{wang2019neural,guo2022aeva,tao2022better}, which synthesized the trigger pattern for backdoor unlearning or detection. Specifically, they first generate the potential trigger pattern for each class and then filter the final synthetic one based on anomaly detection. In this section, we verify that our UBW can bypass these defenses for it breaks their latent assumption that the backdoor attacks are targeted.

\subsec{Settings. }
Since neural cleanse \cite{wang2019neural} is the first and the most representative trigger synthesis based defense, we adopt it as an example to synthesize the trigger pattern of DNNs watermarked by BadNets and our UBW-P and UBW-C. We implement it based on its open-sourced codes\footnote{\url{https://github.com/bolunwang/backdoor}} and default settings.

\subsec{Results. }
As shown in Figure \ref{fig:NC}, the synthesized pattern of BadNets is similar to the ground-truth trigger pattern. However, those of our UBW-P and UBW-C are significantly different from the ground-truth one. These results show that our UBW is resistant to trigger synthesis based defenses.

\begin{figure}[!t]
\centering
\subfigure[]{
\centering
\includegraphics[width=0.22\textwidth]{trigger_real.png}
}\hspace{0.3em}
\subfigure[]{
\centering
\includegraphics[width=0.22\textwidth]{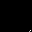}
}\hspace{0.3em}
\subfigure[]{
\centering
\includegraphics[width=0.22\textwidth]{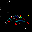}
}\hspace{0.3em}
\subfigure[]{
\centering
\includegraphics[width=0.22\textwidth]{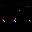}
}
\caption{The ground-truth trigger pattern and those synthesized by neural cleanse. \textbf{(a)}: ground-truth trigger pattern; \textbf{(b)}: synthesized trigger pattern of BadNets; \textbf{(c)}: synthesized trigger pattern of UBW-P; \textbf{(d)}: synthesized trigger pattern of UBW-C. The synthesized pattern of BadNets is similar to the ground-truth one whereas those of our UBW-P and UBW-C are meaningless.}
\label{fig:NC}
\end{figure}

\begin{figure}[!t]
    \centering
    \includegraphics[width=0.95\textwidth]{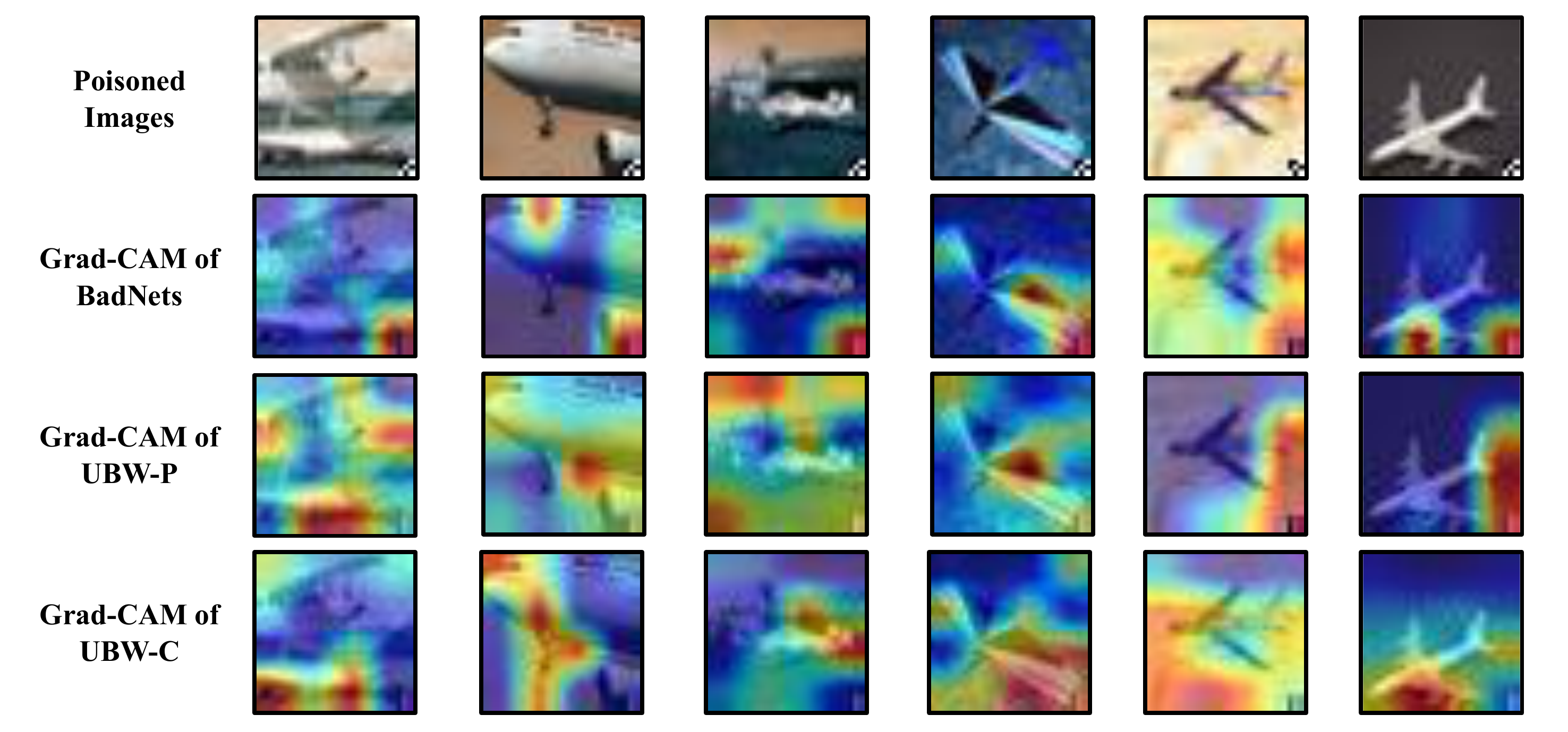}
    \caption{The poisoned images and their saliency maps based on Grad-CAM with DNNs watermarked by different methods. The Grad-CAM mainly focuses on the trigger areas of poisoned images in BadNets, while it mainly focuses on other regions ($e.g.$, object outline) in our UBW.}
    \label{fig:saliency}
\end{figure}

\subsection{Resistance to Saliency-based Defenses}
Since the attack effectiveness is mostly caused by the trigger pattern, there were also some backdoor defenses \cite{huang2019neuroninspect,chou2020sentinet,doan2020februus} based on detecting trigger areas with saliency maps. Specifically, these methods first generated the saliency map of each sample and then obtained trigger regions based on the intersection of all generated saliency maps. Since our UBW is untargeted, the relation between the trigger pattern and the predicted label is less significant compared with existing targeted backdoor attacks. As such, it can bypass those saliency-based defenses, which is verified in this section.

\subsec{Settings. }
We generate the saliency maps of models watermarked by BadNets and our UBW-P and UBW-C, based on the Grad-CAM \cite{selvaraju2017grad} with its default settings. We randomly select samples from the source class to generate their poisoned version for the discussion.

\subsec{Results. }
As shown in Figure \ref{fig:saliency}, the Grad-CAM mainly focuses on the trigger areas of poisoned images in BadNets. In contrast, it mainly focuses on other regions ($e.g.$, object outline) of poisoned images in our UBW-C. We notice that the Grad-CAM also focuses on the trigger areas in our UBW-P in a few cases. It is most probably because the trigger pattern used in the inference process is the same as the one used for training in our UBW-P while we use invisible additive noises in our training process of UBW-C. These results validate that our UBW is resistant to saliency-based defenses.

\begin{table}[!t]
\centering
\caption{The averaged entropy generated by STRIP of models watermarked by different methods. The larger the entropy, the harder for STRIP to detect the watermark.}
\vspace{-0.5em}
\begin{tabular}{c|ccc}
\toprule
Metric$\downarrow$, Method$\rightarrow$ & BadNets & UBW-P  & UBW-C  \\ \midrule
Averaged Entropy        & 0.0093  & 1.5417 & 1.2018 \\ \bottomrule
\end{tabular}
\label{tab:strip}
\end{table}

\begin{table}[!t]
\centering
\caption{The successful filtering rate (\%) on the CIFAR-10 dataset.}
\vspace{-0.5em}
\begin{tabular}{c|cc}
\toprule
Method$\downarrow$, Defense$\rightarrow$ & Spectral Signatures & Activation Clustering \\ \hline
UBW-P           & 10.96               & 52.61                 \\
UBW-C           & 9.40                & 20.51                 \\ \bottomrule
\end{tabular}
\label{tab:resistance_SS_AC}
\end{table}

\subsection{Resistance to STRIP}
Recently, Gao \etal \cite{gao2021design} proposed STRIP to filter poisoned samples based on the prediction variation of samples generated by imposing various image patterns on the suspicious image. The variation is measured by the entropy of the average prediction of those samples. Specifically, the STRIP assumed that the trigger pattern is sample-agnostic and the attack is targeted. Accordingly, the more likely the suspicious image contains trigger pattern, the smaller the entropy since those modified images will still be predicted as the target label so that the average prediction is still nearly an one-hot vector.

\subsec{Settings. }
We randomly select 100 testing images from the source class to generate their poisoned version, based on BadNets and our UBW-P and UBW-C. We calculate the entropy of each poisoned image based on the open-sourced codes\footnote{\url{https://github.com/yjkim721/STRIP-ViTA}} and default settings of STRIP. We then calculate the averaged entropy among all poisoned samples for each watermarking method as their indicator. The larger the entropy, the harder for STRIP to detect the watermark.

\subsec{Results. }
As shown in Table \ref{tab:strip}, the averaged entropies of both UBW-P and UBW-C are significantly higher than that of BadNets. Specifically, the entropies of both UBW-P and UBW-C are more than 100 times larger than that of BadNets. It is mostly due to the untargeted nature of our UBW whose predictions are dispersed. These results verify that our UBW is resistant to STRIP.

\subsection{Resistance to Dataset-level Backdoor Defenses}
In this section, we discuss whether our methods are resistant to dataset-level backdoor defenses.

\subsec{Settings. }
In this part, we adopt the spectral signatures \cite{tran2018spectral} and the activation clustering \cite{chen2019detecting} as representative dataset-level backdoor defenses for our discussion. Both spectral signatures and activation clustering tend to filter poisoned samples from the training dataset, based on sample behaviors in hidden feature space. We implement these methods based on their official open-sourced codes with default settings on the CIFAR-10 dataset. Besides, we adopt the \emph{successful filtering rate} defined as the number of filtered poisoned samples over that of all filtered samples as our evaluation metric. In general, the smaller the successful filtering rate, the more resistance of our UBW.

\subsec{Results. }
As shown in Table \ref{tab:resistance_SS_AC}, these defenses fail to filter our watermarked samples under both poisoned-label and clean-label to some extent. We speculate that it is mostly because poisoned samples generated by our UBW-P and UBW-C tend to scatter in the whole space instead of forming a single cluster in the feature space. We will further explore it in the future.

\subsection{Resistance to MCR and NAD}
Here we discuss whether our methods are resistant to mode connectivity repairing (MCR) \cite{zhao2020bridging} and neural attention distillation (NAD) \cite{li2021neural}, which are two advanced repairing-based backdoor defenses.

\subsec{Settings. }
We implement MCR and NAD based on the codes provided in \texttt{BackdoorBox}.

\begin{table}[!t]
\centering
\caption{The resistance to MCR and NAD on the CIFAR-10 dataset.}
\vspace{-0.5em}
\begin{tabular}{c|cc|cc|cc}
\toprule
Defense$\rightarrow$        & \multicolumn{2}{c|}{No Defense} & \multicolumn{2}{c|}{MCR} & \multicolumn{2}{c}{NAD} \\ \hline
Method$\downarrow$, Metric$\rightarrow$ & BA (\%)             & ASR-A (\%)          & BA (\%)          & ASR-A (\%)      & BA (\%)         & ASR-A (\%)      \\ \hline
UBW-P          & 90.59          & 92.30          & 88.17       & 96.20      & 67.98      & 99.40      \\
UBW-C          & 86.99          & 89.80          & 86.15       & 79.10      & 77.13      & 36.00      \\ \bottomrule
\end{tabular}
\label{tab:resistance_MCR_NAD}
\end{table}

\begin{table}[!t]
\centering
\caption{The effectiveness of our UBW-P with different types of triggers on the CIFAR-10 dataset.}
\vspace{-0.5em}
\begin{tabular}{c|ccc|c}
\toprule
Method$\downarrow$, Metric$\rightarrow$  & BA (\%)    & ASR-A (\%) & ASR-C (\%) & $D_p$   \\ \hline
UBW-P (BadNets) & 90.59 & 92.30 & 92.51 & 2.2548 \\
UBW-P (WaNet)   & 89.90 & 73.00 & 70.45 & 2.0368 \\ \bottomrule
\end{tabular}
\label{tab:UBW-P_WaNet}
\end{table}

\subsec{Results. }
As shown in Table \ref{tab:resistance_MCR_NAD}, both our UBW-P and UBW-C are resistant to MCR and NAD to some extent. Their failures are probably because both of them contain a fine-tuning stage, which is ineffective for our UBWs (as demonstrated in Section \ref{sec:resistance_defenses}).

\section{UBW-P with Imperceptible Trigger Patterns}
In our main manuscript, we design our UBW-P based on BadNets-type triggers since it is the most straightforward method. We intend to show how simple it is to design UBW under the poisoned-label setting. Here we demonstrate that our UBW-P is still effective with imperceptible trigger patterns.

\subsec{Settings. }
We adopt the advanced invisible targeted backdoor attack -- WaNet \cite{nguyen2021wanet} to design our UBW-P with imperceptible trigger patterns. We also implement it based on the open-sourced codes of vanilla WaNet provided in \texttt{BackdoorBox} \cite{li2022backdoorbox}. Specifically, we set the warping kernel size as 16 and conduct experiments on the CIFAR-10 dataset. Except for the trigger patterns, all other settings are the same as those used in our standard UBW-P.

\subsec{Results. }
As shown in Table \ref{tab:UBW-P_WaNet},  our UBW-P can still reach promising performance with imperceptible trigger patterns, although it may have relatively low ASR compared to UBW-P with the BadNets-type visible trigger. It seems that there is a trade-off between ASR and trigger visibility. We will discuss how to better balance the watermark effectiveness and its stealthiness in our future work.

\section{The Transferability of our UBW-C}
Recall that in the optimization process of our UBW-C, we need to know the model structure $f$ in advance. Following the classical settings of bi-level-optimization-type backdoor attacks ($e.g.$, LIRA \cite{doan2021lira} and Sleeper Agent \cite{souri2022sleeper}), we report the results of attacking DNN with the same model structure as the one used for generating poisoned samples. In practice, dataset users may adopt different model structures since dataset owners have no information about the model training. In this section, we evaluate whether the watermarked dataset is still effective in watermarking DNNs having different structures compared to the one used for dataset generation ($i.e.$, transferability).

\subsec{Settings. }
We adopt ResNet-18 to generate a UBW-C training dataset, based on which to train different models ($i.e.$, ResNet-18, ResNet-34, VGG-16-BN, and VGG-19-BN). Except for the model structure, all other settings are the same as those used in Section \ref{sec:exps}.

\begin{table}[!t]
\centering
\caption{The performance of our UBW-C with different model structures trained on the watermarked CIFAR-10 dataset generated with ResNet-18.}
\vspace{-0.5em}
\begin{tabular}{c|cccc}
\toprule
Metric$\downarrow$, Model$\rightarrow$ & ResNet-18 & ResNet-34 & VGG-16-BN & VGG-19-BN \\ \hline
BA (\%)           & 86.99     & 87.34     & 86.83     & 88.55     \\
ASR-A (\%)         & 87.56     & 78.89     & 75.80     & 74.30     \\ \bottomrule
\end{tabular}
\label{sec:transferability}
\end{table}

\subsec{Results. }
As shown in Table \ref{sec:transferability}, our UBW-C has high transferability. Accordingly, our methods are practical in protecting open-sourced datasets.

\section{Connections and Differences with Related Works}
In this section, we discuss the connections and differences between our UBW and adversarial attacks, data poisoning, and classical untargeted attacks. We also discuss the connections and differences between our UBW-based dataset ownership verification and model ownership verification.

\subsection{Connections and Differences with Adversarial Attacks}
Both our UBW and adversarial attacks intend to make the DNNs misclassify samples during the inference process by adding malicious perturbations. However, they still have some intrinsic differences. 

Firstly, the success of adversarial attacks is mostly due to the behavior differences between DNNs and humans, while that of our UBW results from the data-driven training paradigm and excessive learning ability of DNNs. Secondly, the malicious perturbations are known ($i.e.$, non-optimized) by UBW whereas adversarial attacks need to obtain them based on the optimization process. As such, adversarial attacks cannot to be real-time in many cases, since the optimization requires querying the DNNs multiple times under either white-box \cite{bai2020targeted,croce2020reliable,chen2022adversarial} or black-box \cite{chen2017zoo,andriushchenko2020square,feng2022boosting} settings. Lastly, our UBW requires modifying the training samples without any additional requirements in the inference process, while adversarial attacks need to control the inference process to some extent.

\subsection{Connections and Differences with Data Poisoning}
Currently, there are two types of data poisoning, including classical data poisoning \cite{xiao2015feature,koh2017understanding,feng2019learning} and advanced data poisoning \cite{shafahi2018poison,geiping2021witches,schwarzschild2021just}. Specifically, the former type of data poisoning intends to reduce model generalization, so that the attacked DNNs behave well on training samples whereas having limited effectiveness in predicting testing samples. The latter one requires that the model has good benign accuracy while misclassifying some adversary-specified unmodified samples.

Our UBW shares some similarities to data poisoning in the training process. Specifically, they all intend to embed distinctive prediction behaviors in the DNNs by poisoning some training samples. However, they also have many essential differences. The detailed differences are as follows:

\textbf{The Differences Compared with Classical Data Poisoning. }
Firstly, UBW has a different goal compared with classical data poisoning. Specifically, UBW preserves the accuracy in predicting benign testing samples whereas classical data poisoning is not. Secondly, UBW is also more stealthy compared with classical data poisoning, since dataset users can easily detect classical data poisoning by evaluating model performance on a local verification set. In contrast, this method has limited benefits in detecting UBW. Lastly, the effectiveness of classical data poisoning is mostly due to the sensitiveness of the training process, so that even a small domain shift of training samples may lead to significantly different decision surfaces of attacked models. It is different from that of our UBW.

\textbf{The Differences Compared with Advanced Data Poisoning. }
Firstly, advanced data poisoning can only misclassify a few selected images whereas UBW can lead to the misjudgments of all images containing the trigger pattern. It is mostly due to their second difference that data poisoning does not require modifying the (benign) images before the inference process. Thirdly, the effectiveness of advanced data poisoning is mainly because DNNs are over-parameterized, so that the decision surface can have sophisticated structures near the adversary-specified samples for misclassification. It is also different from that of our UBW.

\subsection{Connections and Differences with Classical Untargeted Attacks}

Both our UBW and classical untargeted attacks ($e.g.$, untargeted adversarial attacks) intend to make the model misclassify specific sample(s). However, different from existing classical untargeted attacks which simply maximize the loss between the predictions of those samples and their ground-truth labels, our UBW also requires optimizing the prediction dispersibility so that the adversaries cannot deterministically manipulate model predictions. Maximizing only the untargeted loss may not be able to disperse model predictions, since targeted attacks can also maximize that loss when the target label is different from the ground-truth one of the sample. Besides, introducing prediction dispersibility may also increase the difficulty of the untargeted attack since it may contradict the untargeted loss to some extent (as described in Section \ref{sec:opt_details}).

\subsection{Connections and Differences with Model Ownership Verification}
Our UBW-based dataset ownership verification enjoys some similarities to model ownership verification \cite{adi2018turning,jia2021entangled,li2022defending} since they all conduct verification based on the distinctive behaviors of DNNs. However, they still have many fundamental differences, as follows:

Firstly, dataset ownership verification has different threat models and requires different capacities. Specifically, model ownership verification is adopted to protect the copyrights of open-sourced or deployed models, while our method is for protecting dataset copyrights. Accordingly, our UBW-based method only needs to modify the dataset, whereas model ownership verification usually also requires controlling other training components ($e.g.$, loss). In other words, our UBW-based method can also be exploited to protect model copyrights, whereas most of the existing methods for model ownership verification are not capable to protect (open-sourced) datasets.

Secondly, to the best of our knowledge, almost all existing black-box model ownership verification was designed based on the targeted attacks ($e.g.$, targeted poison-only backdoor attacks) and therefore introducing new security risks in DNNs. In contrast, our verification method is mostly harmless, since our UBW used for dataset watermarking is untargeted and with high prediction dispersibility.

\subsection{Connections and Differences with Radioactive Data}
We notice that radioactive data (RD) \cite{sablayrolles2020radioactive} (under the black-box setting) can also be exploited as dataset watermarking for ownership verification by analyzing the loss of watermarked and benign images. If the loss of watermarked images is significantly lower than that of their benign version, RD treats the suspicious model as trained on the protected dataset. Both RD and UBW-C require knowing the model structure in advance, although they all have transferability. However, they still have many fundamental differences, as follows:

Firstly, our UBWs have a different verification mechanism compared to RD. Specifically, UBWs adopt the change of predicted probability on the ground-truth label, while RD exploits the loss change for verification. In practice, it is relatively difficult to select the confidence budget for RD since the loss values may change significantly across different datasets. In contrast, users can easily select the confidence budget ($i.e.$, $\tau$) from $[0, 1]$ since the predicted probability on the ground-truth label are relatively stable ($e.g.$, nearly 1 for benign samples).

Secondly, our UBWs require fewer defender capacities compared to RD. RD needs to have the prediction vectors or even the model source files for ownership verification, whereas UBWs only require the probability in the predicted label. Accordingly, our method can even be generalized to the scenario that users can only obtain the predicted labels (as suggested in \cite{li2022black}), based on examining whether poisoned images have different predictions compared to their benign version, whereas RD cannot. We will further discuss the label-only UBW verification in our future work.

Lastly, it seems that RD is far less effective on datasets with relatively low image resolution and fewer samples ($e.g.$, CIFAR-10)\footnote{\url{https://github.com/facebookresearch/radioactive_data/issues/3}}. In contrast, our methods have promising performance on them.

\section{Discussions about Adopted Data}
In this paper, all adopted samples are from the open-sourced datasets ($i.e.$, CIFAR-10 and ImageNet). The ImageNet dataset may contain a few personal contents, such as human faces. However, our research treats all objects the same and does not intentionally exploit or manipulate these contents. Accordingly, our work fulfills the requirements of those datasets and should not be regarded as a violation of personal privacy. Besides, our samples contain no offensive content, since we only add some invisible noises or non-semantic patches to a few benign images.

\end{document}